\let\counterwithin\relax  %
\newcommand{\D}{\mathcal{D}}
\definecolor{dark-gray}{gray}{0.3}
\definecolor{dkgray}{rgb}{.4,.4,.4}
\definecolor{dkblue}{rgb}{0,0,.5}
\definecolor{medblue}{rgb}{0,0,.75}
\definecolor{rust}{rgb}{0.5,0.1,0.1}
\newtheoremstyle{myThm} %
    {\topsep}                    %
    {\topsep}                    %
    {\itshape}                   %
    {}                           %
    {\sffamily\bfseries}                   %
    {.}                          %
    {.5em}                       %
    {}  %
\newtheoremstyle{myRem} %
    {\topsep}                    %
    {\topsep}                    %
    {}                   %
    {}                           %
    {\sffamily}                   %
    {.}                          %
    {.5em}                       %
    {}  %
\newtheoremstyle{myDef} %
    {\topsep}                    %
    {\topsep}                    %
    {}                   %
    {}                           %
    {\sffamily\bfseries}                   %
    {.}                          %
    {.5em}                       %
    {}  %
\theoremstyle{myThm}
\newtheorem{theorem}{Theorem}[section]
\newtheorem{lemma}[theorem]{Lemma}
\newtheorem{proposition}[theorem]{Proposition}
\theoremstyle{myRem}
\newtheorem{remark}[theorem]{Remark}
\theoremstyle{myDef}
\let\originalleft\left
\let\originalright\right
\renewcommand{\left}{\mathopen{}\mathclose\bgroup\originalleft}
\renewcommand{\right}{\aftergroup\egroup\originalright}
\renewcommand{\phi}{\varphi}
\newcommand{\eps}{\varepsilon}
\providecommand{\mathbbm}{\mathbb} %
\newcommand{\N}{\mathbbm{N}}
\newcommand{\G}{\mathcal{G}}
\definecolor{mygreen}{rgb}{0.1,0.75,0.2}
\newcommand{\Nc}{\mathcal{N}}
  \newcommand{\0}{{\bf{0}}}
\newcommand{\cmat}{c_{\mbox {\tiny{\rm Mat}}}}
\renewcommand{\i}{ {\bf {i}}}  
\newcommand{\x}{ {\bf {x}}}  
\newcommand{\X}{ {\bf {X}}}  
\newcommand{\y}{ {\bf {y}}}  
\newcommand{\w}{ {\bf {w}}}  
\newcommand{\z}{ {\bf {z}}}  
\renewcommand{\k}{ {\bf {k}}}  
  \newcommand{\Q}{{\bf{Q}}}
 \renewcommand{\G}{{\bf{G}}}
  \newcommand{\M}{{\bf{M}}}
  \renewcommand{\S}{{\bf{S}}}
    \newcommand{\R}{{\bf{R}}}
   \newcommand{\I}{{\bf{I}}}
\newcommand{\bLambda}{{\bf{\Lambda}}}
\newcommand{\h}{ {\bf {h}}}
\title{Finite Element Representations of Gaussian Processes: \\ Balancing Numerical and Statistical Accuracy}   %
\author{Daniel Sanz-Alonso and Ruiyi Yang}
\date{University of Chicago}
\makeatletter\@addtoreset{section}{part}\makeatother%
\numberwithin{equation}{section}
\newcommand{\upperRomannumeral}[1]{\uppercase\expandafter{\romannumeral#1}}
\begin{document}

\maketitle

\begin{abstract}
  The stochastic partial differential equation approach to Gaussian processes (GPs) represents Mat\'ern  GP priors in terms of $n$ finite element basis functions and Gaussian coefficients with sparse precision matrix. 
Such representations enhance the scalability of GP regression and classification to datasets of large size $N$ by setting $n\approx N$ and exploiting sparsity. 
In this paper we reconsider the standard choice $n \approx N$ through an analysis of the estimation performance.
Our theory implies that, under certain smoothness assumptions, one can reduce the computation and memory cost without hindering the estimation accuracy by setting $n \ll N$ in the large $N$ asymptotics. Numerical experiments illustrate the applicability of our theory and the effect of the prior lengthscale in the pre-asymptotic regime.
\end{abstract}

\section{Introduction}
Gaussian processes (GPs) are an important model for prior distributions over functions, and play a central role in spatial statistics, machine learning, Bayesian inverse problems, and a variety of other scientific and engineering applications \cite{stein1999interpolation,williams2006gaussian,stuart2010inverse,sullivan2015introduction,owhadi2019statistical,owhadi2019operator}. However, GP methodology often suffers from the \emph{big $N$ problem}: conditioning a GP to $N$ observations requires to factorize an $N\times N$ covariance matrix, with a general cost of $O(N^3)$. Numerous approaches to address this challenge have been developed \cite{heaton2019case}. The aim of this paper is to provide novel understanding of the popular stochastic partial differential equation (SPDE) approach   \cite{lindgren2011explicit,lindgren2022spde}   for GP regression and classification with large datasets. 

Let $u(\x)$ be a Mat\'ern-type GP (see e.g. \cite{bolin2020numerical} or Subsection \ref{ssec:SPDEbg} below) on a bounded domain $\D \subset \mathbb{R}^D.$ The SPDE approach approximates $u$ with a GP $u_h$ of the form
\begin{equation}\label{eq:GPexpansion}
u_h(\x)=\sum_{i=1}^{n_h} w_i e_i(\x), \quad \quad \x \in \D ,
\end{equation}
where $e_i: \D \to \mathbb{R}$ are finite element (FE) basis functions and $\w := (w_1, \ldots, w_{n_h})^\top  \sim \mathcal{N}(\0,\Q^{-1})$ with  sparse precision matrix $\Q \in \mathbb{R}^{n_h \times n_h}.$ The dimension $n_h$ of the basis $\{e_i\}_{i=1}^{n_h}$ is determined by a mesh-size parameter $h>0.$ Previous work sets $h$ so that $n_h \approx N,$ and the $O(N^3)$ computational cost is reduced by exploiting the local support of the FE basis functions $e_i$ and the sparsity of $\Q$,  see e.g.  \cite{lindgren2011explicit,bolin2013comparison,bolin2020rational} or Subsection \ref{ssec: FEM for regression and classification}. However, the choice $n_h \approx N$ has not been theoretically or empirically investigated. In particular, it is not clear if the computational gain achieved with $n_h\approx N$ comes at the price of  larger  estimation error. In this paper we shall introduce a framework for selecting $n_h$ based on the posterior estimation performance achieved when using GP prior $u_h$. 
Our theory implies that, under certain smoothness assumptions, choosing $n_h \ll N$ can indeed be sufficient in the large $N$ asymptotics, as otherwise the statistical errors inherent to the regression or classification tasks dominate the numerical error in the approximation $u_h \approx u$. Therefore, in addition to the computational gain facilitated by sparsity, there is a second computational and memory gain: the \emph{dimension} of the matrices that need to be factorized can be reduced in large $N$ regimes without hindering the estimation accuracy. Numerical experiments will illustrate the applicability of our theory and the effect of the prior lengthscale in the pre-asymptotic regime.
  
The SPDE approach is part of a trend in GP methodology that seeks to leverage sparsity for computational efficiency   \cite{quinonero2005unifying}.   In this spirit, one can construct sparse approximations of the covariance matrix of the observations (a procedure known as \emph{tapering} or \emph{localization} \cite{gaspari1999construction,furrer2006covariance}),    or of the precision matrix \cite{datta2016hierarchical} and its Cholesky factor \cite{schafer2021sparse,kang2021correlation}. Other approaches exploiting sparsity include  Vecchia approximations \cite{vecchia1988estimation,katzfuss2021general} and methods based on the screening effect \cite{stein2002screening}.   These techniques are well established in several applications and are essential,   for example,   in the practical implementation of data assimilation algorithms for numerical weather forecasting \cite{houtekamer2001sequential}.
 A complementary line of work relies on \emph{smoothness} rather than sparsity for computational expediency. For instance, truncated Karhunen-Lo\`eve expansions in Bayesian inverse problems rely on a representation of the form  \eqref{eq:GPexpansion} with small dimension $n$, spectral basis functions, and stochastic weights with diagonal covariance \cite{stuart2010inverse}. These low-rank representations \cite{solin2020hilbert,greengard2021efficient} have   been claimed to remove   fine-scale variations of the process \cite{bolin2020rational},   but can be accurate if the underlying process is smooth.   Our work blends sparsity and smoothness demonstrating that, for regression and classification with large data-sets, sparse methods can benefit from a significant dimension reduction under mild smoothness assumptions.

To propose a criterion for choosing $n_h$ with respect to $N$, we will exploit the concept of posterior contraction rates \cite{ghosal2000convergence}, which is discussed in Subsection \ref{ssec:bg pc}. Roughly speaking, we consider a scaling sufficient if the posterior constructed with GP prior $u_h$ contracts at the same rate as the posterior constructed with the true GP prior $u.$  
The Bayesian nonparametrics framework in \cite{van2008rates} guarantees that if the rate of convergence of the GP prior approximation $u_h \approx u$ is fast enough, then the corresponding posteriors contract at the same rate. Establishing convergence rates for approximations $u_h \approx u$ is an active research area  on numerical analysis of FE solution of fractional SPDEs \cite{bolin2020numerical,bolin2020rational,cox2020regularity}. As part of our analysis, we derive a crude estimate of the approximation error $\mathbb{E}\|u_h-u\|_{\infty}^2$ for a particular FE discretization when $\mathcal{D}$ is a hyperrectangle.
The result holds for general dimension $D$ while being less sharp than the one-dimensional result in \cite{cox2020regularity}, which also allows for more general domains. 
However, our main objective is to illustrate that the plug-in character of the framework \cite{van2008rates} allows to seamlessly translate $L^\infty$ and $L^2$ error bounds for the approximation $u_h  \approx u$ into sufficient choices of $n_h$ in terms of $N$ in regression and classification settings. As we shall see, even crude error bounds suggest that, in the large $N$ asymptotics, $n_h \ll N$ can be sufficient under mild smoothness assumptions.

Numerical simulations in the regression setting will complement our theoretical analysis. Our experiments illustrate that (i) the qualitative theoretical behavior suggested by our large $N$ asymptotic analysis is in agreement with the behavior observed with moderate sample-size;  (ii) if the truth is not smooth and has a short lengthscale, choosing $n_h \gg N$ may indeed be necessary for the SPDE approach to match the estimation accuracy of the ground truth prior model; and (iii) outside the large $N$ asymptotic regime, the prior lengthscale plays an important role in determining appropriate choice of $n_h$ in terms of $N.$ This last point is also partly explained by our theory, where the lengthscale appears as a prefactor in the error bound for the FE prior representations. We believe these findings together with our theoretical results provide useful insights for calibrating the FE approach in practice.

The study of fixed-domain, large $N$ asymptotics \cite{stein1999interpolation,stein1990uniform,stein1990bounds,stein1999predicting,du2009fixed,wang2011fixed}  is motivated by applications in environmental science, ecology, climate, and hydrology, where $N$ is often in the order of hundreds of thousands or larger. At a high level, our criterion resembles the in-fill asymptotic analysis of tapered covariance functions in \cite{furrer2006covariance}, where the authors give conditions on the taper function that guarantee large-data asymptotic equivalence of the mean-squared prediction error of the true and tapered covariance models. As in \cite{furrer2006covariance}, we may interpret $u_h$ as defining a misspecified  covariance model and then, similar to \cite{furrer2006covariance,stein1993simple}, our criterion guarantees that the misspecification is inconsequential in a large-data regime. On the other hand, even if the Mat\'ern-type GP $u$ is not interpreted as a ground truth prior model, our analysis suggests that over-discretizing the FE representations $u_h$ should be avoided, as there is a threshold beyond which further discretizing increases the computational cost without improving the estimation accuracy. 
Similar ideas permeate the study of the value of unlabeled data in  semi-supervised learning  \cite{sanz2020unlabeled}  with graph representations of Mat\'ern GPs \cite{sanz2020spde}.

The rest of this paper is organized as follows. We provide all necessary background and formalize our problem setting in Section \ref{sec:bg}. Our main results are in Section \ref{sec:main}   and complementary   numerical experiments in Section \ref{sec:numerics}. We close in Section \ref{sec:conclusion} with    possible extensions of our main results and open directions that stem from our work.    All the proofs are deferred to Section \ref{sec:proofs}.

\paragraph{Notation.}
For $a,b$ two real numbers, we denote $a\wedge b=$ min$\{a,b\}$ and $a\vee b=$ max$\{a,b\}$. The symbol $\lesssim$ will denote less than or equal to up to a universal constant. For two real sequences $\{a_i\}$ and $\{b_i\}$, we denote   (i) $a_i\ll b_i$ if $\operatorname{lim}_i (a_i/b_i)=0$;    (ii) $a_i=O(b_i)$ if $\operatorname{lim\, sup}_i (a_i/b_i)\leq C$ for some positive constant $C$; and  (iii) $a_i\asymp b_i$ if  $c_1\leq \operatorname{lim\,inf}_i (a_i/b_i) \leq \operatorname{lim\,sup}_i (a_i/b_i) \leq c_2$ for some positive constants $c_1,c_2$. For a nonnegative integer $K$, we denote $[K]=\{0,\ldots,K\}$.

\section{Background and Problem Setting}\label{sec:bg}
To make our presentation self-contained, we introduce in this section all necessary background and formalize our problem setting. Mat\'ern-type GPs and their connection with the classical Mat\'ern covariance function are discussed in Subsection \ref{ssec:SPDEbg}. Subsection \ref{ssec:bg fem} reviews FE representations of Mat\'ern-type GPs. Our regression and classification problem settings are formalized in
 Subsection \ref{ssec: FEM for regression and classification}, where we also summarize how FE representations of Mat\'ern-type GPs allow to speed up computations. Finally, Subsection \ref{ssec:bg pc} overviews the Bayesian nonparametrics framework that we employ as our criterion to identify sufficient scalings of $n_h$ with respect to $N.$

\subsection{The Mat\'ern Covariance Function and SPDE Representations}\label{ssec:SPDEbg}
Recall that the Mat\'ern covariance function is defined by
\begin{align}
    \cmat(\x,\x')=\sigma^2 \frac{2^{1-\nu}}{\Gamma(\nu)}\left(\kappa|\x-\x'|\right)^{\nu}K_{\nu}\left(\kappa|\x- \x'|\right), \quad \quad \x,\x'\in\mathbb{R}^D,\label{eq:matern covariance}
\end{align}
where $|\cdot|$ is the Euclidean distance on $\mathbb{R}^D$, $\Gamma$ is the gamma function and $K_{\nu}$ is the modified Bessel function of the second kind. The parameters $\sigma$, $\nu$, $\kappa$ control, respectively, the marginal variance, smoothness of the sample paths, and correlation lengthscale.
Due to its flexibility, the Mat\'ern model is widely used in spatial statistics \cite{stein1999interpolation,gelfand2010handbook}, machine learning \cite{williams2006gaussian}, and uncertainty quantification \cite{sullivan2015introduction}, with applications in various scientific fields \cite{guttorp2006studies,cameletti2013spatio}.
The connection between the Mat\'ern covariance and SPDEs has long been noticed \cite{whittle1954stationary}. Consider formally the equation
\begin{align}
    (\kappa^2-\Delta)^{s/2}u = \kappa^{s-D/2}\mathcal{W}  \quad \quad \text{in}\,\, \mathcal{D}, \label{eq:spde global}
\end{align}
where $s= \nu + D/2,$ $\Delta$ is a Laplacian and $\mathcal{W}$ is a spatial white noise. (Here and below we will ignore the marginal variance which acts only as a scaling factor.) If $\D : = \mathbb{R}^D,$ then the unique stationary solution to \eqref{eq:spde global}, suitably interpreted \cite{whittle1954stationary}, has covariance function \eqref{eq:matern covariance}. 

Following \cite{lindgren2011explicit}, we will define \emph{Mat\'ern-type GPs}  by solution of  \eqref{eq:spde global} in a \emph{bounded} domain $\D \subset \mathbb{R}^D$, interpreting the SPDE \eqref{eq:spde global}  as in \cite{bolin2020numerical}. We outline here the main ideas and refer to \cite{bolin2020numerical} for further details. Let $\mathcal{L} :=\kappa^2-\Delta$ be equipped with homogeneous Dirichlet or Neumann boundary condition. The eigenfunctions $\{\psi_i\}_{i=1}^{\infty}$   of the Dirichlet (or Neumann) Laplacian   form an orthonormal basis of $L^2(\mathcal{D})$, where the associated ordered eigenvalues $\{\lambda_i\}_{i=1}^{\infty}$ satisfy $\lambda_i\asymp i^{2/D}$ by Weyl's law (see e.g.  \cite[Theorem 6.3.1]{davies1996spectral}).  The fractional power operator $\mathcal{L}^{s/2}$ in \eqref{eq:spde global} is then defined by
\begin{align*}
    \mathcal{L}^{s/2} u :=\sum_{i=1}^{\infty} (\kappa^2+\lambda_i)^{s/2}\langle u,\psi_i \rangle \psi_i 
\end{align*}
with domain $\left\{u\in L^2(\mathcal{D}): \sum_{i=1}^{\infty}(\kappa^2+\lambda_i)^{s}\langle u,\psi_i \rangle^2<\infty\right\}$, where $\langle \cdot,\cdot\rangle$ denotes the $L^2(\mathcal{D})$-inner product. The white noise in \eqref{eq:spde global} is  formally defined by the series $\mathcal{W} = \sum_{i=1}^\infty \xi_i \psi_i,$ with $\xi_i \smash{\overset{i.i.d.}{\sim}} \mathcal{N}(0,1) $ set on a complete probability space $(\Omega,\mathcal{A},\mathbb{P}).$ As rigorously shown in \cite[Lemma 2.1]{bolin2020numerical},
existence and uniqueness of solutions to \eqref{eq:spde global} in $L^2(\Omega;L^2(\mathcal{D}))$ is guaranteed for $s>D/2$. Moreover, the solution can be represented as a series expansion 
\begin{align}
    u(\x)=\kappa^{s-D/2}\sum_{i=1}^{\infty} (\kappa^2+\lambda_i)^{-s/2}\xi_i \psi_i(\x),\quad \quad \xi_i\overset{i.i.d.}{\sim} \mathcal{N}(0,1), \quad \quad \x \in \mathcal{D}, \label{eq:KL expansion true field}
\end{align}
where the assumption $s>D/2$ together with Weyl's law guarantees that $u\in L^2(\mathcal{D})$ almost surely. 
We refer to $u$ defined by \eqref{eq:KL expansion true field} as a Mat\'ern-type GP.
The covariance function of Mat\'ern-type GPs  no longer agrees with the classical Mat\'ern covariance model \eqref{eq:matern covariance}, but approximates it well away from the boundary ---see for instance Proposition \ref{prop:covariances are close} below.

\subsection{Finite Element Representations of Mat\'ern-type Gaussian Processes}\label{ssec:bg fem}
Let $\mathcal{D} \subset \mathbb{R}^D$ be a bounded domain and let $\{V_h\}_{h\in(0,1)}$ be a family of subspaces of $H^1(\mathcal{D})$ (the space of functions whose weak derivatives belong to $L^2(\mathcal{D})$) with finite dimensions $n_h : = \text{dim} (V_h)<\infty$. In subsequent developments $h$ will play the role of a mesh-size parameter and $n_h \asymp  h^{-D}$. Consider the Galerkin discretization $-\Delta_h:V_h\rightarrow V_h$ of $-\Delta$  defined as 
\begin{align*}
    \langle -\Delta_h u_h,v_h \rangle = \langle -\Delta u_h,v_h\rangle \quad \quad \forall \, u_h,v_h\in V_h.
\end{align*}
Let $\{(\lambda_{h,i},\psi_{h,i})\}_{i=1}^{n_h}$ be the eigenpairs of $-\Delta_h$ satisfying
\begin{align*}
    \langle -\Delta_h \psi_{h,i}, v_h\rangle = \lambda_{h,i}\langle  \psi_{h,i}, v_h\rangle \quad \quad \forall v_h\in V_h,
\end{align*}
where we assume the $\lambda_{h,i}$'s are in increasing order and the $\psi_{h,i}$'s are orthonormal.  We then define a discretization of the SPDE  \eqref{eq:spde global} by
\begin{align}
    \mathcal{L}_{h}^{s/2}u_h:=(\kappa^2-\Delta_h)^{s/2}u_h =\kappa^{s-D/2}\mathcal{W}_h, \quad \quad \mathcal{W}_h:=\sum_{i=1}^{n_h} \xi_i \psi_{h,i},\quad \quad \xi_i \overset{i.i.d.}{\sim} \mathcal{N}(0,1). \label{eq:SPDE fem}
\end{align}
We refer to the solution $u_h$ as a FE representation of the Mat\'ern-type GP $u$. Note that 
\begin{equation}
u_h(\x) =\kappa^{s-D/2}\sum_{i=1}^{n_h}(\kappa^2+\lambda_{h,i})^{-s/2}\xi_i \psi_{h,i}(\x), \quad \quad \xi_i \overset{i.i.d.}{\sim} \mathcal{N}(0,1), \quad \quad \x \in \mathcal{D}. \label{eq:KL fem approx}
\end{equation}
Inspection of \eqref{eq:KL expansion true field} and \eqref{eq:KL fem approx} suggests that the error in the approximation $u_h \approx u$ is largely determined by the FE error in the approximations $\lambda_{h,i} \approx \lambda_i$ and $\psi_{h,i} \approx \psi_i.$ We will pursue this idea in our error analysis in Section \ref{sec:main}. However, the Karhunen-Lo\`eve representation \eqref{eq:KL fem approx} is not   in general    useful for practical implementation, as the eigenpairs $\{(\lambda_{h,i},\psi_{h,i})\}_{i=1}^{n_h}$   can be   expensive to compute and the eigenfunctions do not have compact support.   The following result from \cite{lindgren2011explicit}   shows that the solution to \eqref{eq:SPDE fem} admits an equivalent representation in terms of a FE basis, as foreshadowed in \eqref{eq:GPexpansion}. 
\begin{proposition}\label{prop:covariance of FEM approximation}
Let $\{e_{h,i}\}_{i=1}^{n_h}$ be a FE basis of $V_h,$ and denote by $\M$ and $\G$ the mass and stiffness matrices with entries $\M_{ij}=\langle e_{h,i},e_{h,j} \rangle$ and $\G_{ij}=\langle \nabla e_{h,i},\nabla e_{h,j}\rangle$.  
For $0\neq s \in \N,$ the FE representation $u_h$ of the Mat\'ern-type GP $u$ admits the characterization
\begin{align}
 u_h(\x) =\sum_{i=1}^{n_h}w_i e_{h,i}(\x), \quad   \w\sim\mathcal{N}(\0,\Q^{-1}),        \label{eq: fem repre}
\end{align}
where $\Q=(\kappa^2\M+\G)\bigl[\M^{-1}(\kappa^2 \M+\G)\bigr]^{s-1}.$
\end{proposition}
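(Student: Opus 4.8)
The plan is to exploit the Karhunen--Lo\`eve representation \eqref{eq:KL fem approx} directly, translating it from the eigenbasis $\{\psi_{h,i}\}$ into the FE basis $\{e_{h,i}\}$ and reading off the law of the coefficient vector $\w$. Since $u_h\in V_h$ is a mean-zero Gaussian field (a linear combination of the i.i.d.\ $\xi_i$), its coefficients $\w$ form a mean-zero Gaussian vector, so it suffices to identify $\mathrm{Cov}(\w)$ and show its inverse equals the stated $\Q$ (up to the scalar marginal-variance factor the paper ignores).

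First I would encode the change of basis by the matrix $\P\in\mathbb{R}^{n_h\times n_h}$ whose $i$-th column holds the FE coefficients of $\psi_{h,i}$, i.e.\ $\psi_{h,i}=\sum_j \P_{ji}e_{h,j}$. Testing the discrete eigenvalue problem for $-\Delta_h$ against each $e_{h,k}$ and using the definitions of $\M$ and $\G$ turns it into the generalized eigenproblem $\G\,\P_{\cdot i}=\lambda_{h,i}\,\M\,\P_{\cdot i}$, while $L^2$-orthonormality of the $\psi_{h,i}$ reads $\P^\top\M\P=\I$; hence $\P^{-1}=\P^\top\M$ and $\P\P^\top=\M^{-1}$. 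Substituting $\psi_{h,i}=\sum_j\P_{ji}e_{h,j}$ into \eqref{eq:KL fem approx} and matching coefficients then gives, with $\boldsymbol{\xi}=(\xi_1,\dots,\xi_{n_h})^\top\sim\mathcal N(\0,\I)$ and $\bLambda=\mathrm{diag}(\kappa^2+\lambda_{h,i})$,
\begin{equation*}
\w=\kappa^{s-D/2}\,\P\,\bLambda^{-s/2}\boldsymbol{\xi}.
\end{equation*}

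From here $\mathrm{Cov}(\w)=\kappa^{2s-D}\P\bLambda^{-s}\P^\top$, so the precision is $\Q=\kappa^{D-2s}(\P^\top)^{-1}\bLambda^{s}\P^{-1}=\kappa^{D-2s}\M\P\bLambda^{s}\P^\top\M$. The last step is to recognize the matrix ${\bf A}:=\M^{-1}(\kappa^2\M+\G)$, which by the generalized eigenproblem satisfies ${\bf A}\P=\P\bLambda$ and therefore ${\bf A}^s=\P\bLambda^s\P^{-1}=\P\bLambda^s\P^\top\M$. Thus $\Q=\kappa^{D-2s}\M{\bf A}^s=\kappa^{D-2s}\M\bigl(\M^{-1}(\kappa^2\M+\G)\bigr)^s$, which telescopes (cancelling one $\M\M^{-1}$) to $\kappa^{D-2s}(\kappa^2\M+\G)\bigl[\M^{-1}(\kappa^2\M+\G)\bigr]^{s-1}$. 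Dropping the scalar $\kappa^{D-2s}$ into the marginal-variance normalization yields exactly \eqref{eq: fem repre}.

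The main points requiring care are that ${\bf A}$ and $\M^{-1}\G$ are not symmetric but are self-adjoint in the $\M$-inner product, so the diagonalization $\P^{-1}=\P^\top\M$ and the functional calculus behind $\bLambda^{-s/2}$ must be justified through this weighted orthogonality rather than by ordinary symmetry; and that the hypothesis $0\neq s\in\N$ is what makes ${\bf A}^s$ a genuine matrix power and lets the telescoping collapse $\M{\bf A}^s$ into the finite product $(\kappa^2\M+\G)\bigl[\M^{-1}(\kappa^2\M+\G)\bigr]^{s-1}$, giving the sparse closed form (for non-integer $s$ no such finite, sparse expression is available). The remaining manipulations --- the coefficient matching and the inversion of $\P\bLambda^{-s}\P^\top$ --- are routine.
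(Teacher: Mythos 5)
Your argument is correct and follows essentially the same route as the paper: both start from the Karhunen--Lo\`eve form \eqref{eq:KL fem approx}, convert to the FE basis via the generalized eigenproblem and the mass-orthonormality of the $\psi_{h,i}$, and identify the precision matrix by the same telescoping of $\M\bigl[\M^{-1}(\kappa^2\M+\G)\bigr]^{s}$. The only difference is notational --- you diagonalize with the coefficient matrix $\P$ where the paper works with the Gram matrix $\R_{ij}=\langle e_{h,i},\psi_{h,j}\rangle=( \M\P)_{ij}$ --- and you are right that the overall factor $\kappa^{D-2s}$ is absorbed into the marginal-variance normalization the paper declares it will ignore.
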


Notice that \eqref{eq: fem repre} does not involve the eigenpairs. Moreover,
the matrices $\M$ and $\G$ are sparse for standard FE basis $e_{h,i}$, e.g. tent functions. Lumping the mass matrix $\M$ ensures sparsity of $\Q$ and gives a Gauss-Markov approximation to the Mat\'ern-type GP $u$ \cite{lindgren2011explicit}.  For $s\notin\mathbb{N}$,  the rational SPDE approach can be adopted \cite{bolin2020rational}.

\subsection{Gaussian Process Regression and Classification: Finite Element Representations}\label{ssec: FEM for regression and classification}
Here we introduce the regression and classification models we consider, and describe briefly how  FE representations of Mat\'ern-type GPs can alleviate the computational burden of these tasks.
Given $N$ pairs of data $\{(\X_i,Y_i)\}_{i=1}^N$ we are interested in inferring  $f_0(\x)=\mathbb{E}[Y|\X=\x]$ under the following data-generating mechanisms:
\begin{itemize}
    \item Fixed design regression: $Y_i=f_0(\X_i)+\eta_i$, where the $\X_i$'s are fixed   (and distinct) 
    covariates   and $\eta_i\overset{i.i.d.}{\sim}\mathcal{N}(0,\tau^2)$ with $\tau$ known.
    \item Binary classification: $\mathbb{P}(Y_i=1| \X_i)=f_0(\X_i)$, where $\X_i\overset{i.i.d.}{\sim} \mu$ for some distribution $\mu$ over $\mathcal{D}$.
\end{itemize}
For simplicity we shall assume for the rest of this paper that $\mu$ is the uniform distribution over $\mathcal{D},$ but we note that it suffices to assume that $\mu$ admits a Lebesgue density bounded above and below by positive constants.

For fixed design regression, we set a FE Mat\'ern-type GP prior $u_h$ on $f_0.$  The posterior of the weights $\w$ is given by
\begin{align*}
    \w|\{(\X_i,Y_i)\}_{i=1}^N \sim \mathcal{N}\big(    (\S^\top \S+\tau^{2}\Q)^{-1}\S^\top \y, (\tau^{-2}\S^\top \S+\Q)^{-1}\big),
\end{align*}
where $\S\in\mathbb{R}^{N\times n_h}$ has entries $\S_{ij} = e_j(\X_i)$ and $\y = (Y_1, \ldots, Y_N)^\top.$
The main computational cost for posterior inference is in factorizing the $n_h\times n_h$ matrix $\S^\top \S+\tau^2 \Q$. This factorization can be efficiently computed since the local support of standard FE basis functions ensures sparsity of $\S,$ and $\Q$ can be made sparse as discussed in Subsection \ref{ssec:bg fem}.

For binary classification, let $\Phi$ be the logistic function and consider a wrapped  GP prior $\Phi \circ u_h$ over $f_0$. 
The posterior log-density is given by
\begin{align}
    \log \mathbb{P}( \w| \{(\X_i,Y_i)\}_{i=1}^N) &= \sum_{i=1}^N Y_i\log \Phi((\S\w)_i) + (1-Y_i)\log (1-\Phi((\S\w)_i))\label{eq:binary classification posterio LL} \\
    &\quad \quad -\frac12 \w^\top \Q\w+ \text{const},\nonumber 
\end{align}
where $(\S\w)_i$ denotes the $i$-th entry of $\S\w$. Two standard procedures for posterior inference are \emph{maximum a posteriori } (MAP) estimation and Markov chain Monte Carlo (MCMC) sampling.
To compute the MAP estimate,  \eqref{eq:binary classification posterio LL} is optimized to recover the weights with highest posterior density. This optimization problem can be efficiently solved using the  Hessian of the objective function, which takes the form $\S^\top {\bf D} \S- \Q$, where ${\bf D}$ is a diagonal matrix with
\begin{align*}
  {\bf D}_{ii}= \Phi^{\prime\prime}((\S\w)_i)\left[\frac{Y_i}{\Phi((\S\w)_i)}-\frac{1-Y_i}{1-\Phi((\S\w)_i)}\right]-[\Phi^{\prime}((\S\w)_i)]^2\left[\frac{Y_i}{\Phi((\S\w)_i)^2}+\frac{1-Y_i}{(1-\Phi((\S\w)_i))^2}\right]. 
\end{align*}
Therefore, the computational cost is largely determined by the sparsity of the matrix $\S^\top{\bf D} \S-\Q$, which in turn depends on the sparsity of $\S$ and $\Q$.  
On the other hand, MCMC algorithms for posterior inference with GP priors have been widely studied \cite{bernardo1998regression,cotter2013,cui2016dimension,trillos2017consistency,sanzstuarttaeb}, and a key idea behind these methods is to employ a proposal mechanism $\w \mapsto \w'$ of the form 
\begin{equation}
\w' = \theta\, \w + (1-\theta)^{1/2} \boldsymbol{\gamma},  \quad \quad \boldsymbol{\gamma} \sim \Nc(0, \Q^{-1}),
\end{equation}
which leaves the prior distribution  $\Nc( \0, \Q^{-1})$ of the weights invariant. In order to sample $\boldsymbol{\gamma}\sim \Nc( \0, \Q^{-1})$ with large $n_h$ it is important to leverage sparsity of $\Q$ \cite{rue2005gaussian}.

\subsection{Our Criterion: Matching Posterior Contraction Rates}\label{ssec:bg pc}

The FE approach outlined above involves a user-chosen hyperparameter $h$ that affects both the estimation performance and computational cost. Smaller $h$ leads to better approximation of the Mat\'ern-type GP $u$ by $u_h$ and possibly enhanced inference, but renders a larger $n_h$  that increases the computational cost. Since $u_h$ is supposed to approximate the Mat\'ern-type GP $u$, a natural choice for $h$ is so that the estimation performance of using $u_h$ as the prior is ``comparable'' to that of $u$. In this section we shall formalize such intuition with the notion of posterior contraction rates. 

To begin with, recall that the goal is to infer the conditional expectation $f_0(\x)=\mathbb{E}[Y|\X=\x]$ from data $\{(\X_i,Y_i)\}_{i=1}^N$.   We shall adopt a frequentist Bayesian perspective by putting a sequence of priors $\Pi_N$ over $f_0$ and assuming that the data are indeed generated from a fixed $f_0$ which we interpret as the ground truth.   Following \cite{ghosal2000convergence}, we say that the sequence of posteriors with respect to $\Pi_N$ contracts around $f_0$ with rate $\varepsilon_N$ if, for any sufficiently large $M>0,$
\begin{align}
    \mathbb{E}_{f_0} \Pi_N\left(f:d_N(f,f_0)\leq M\varepsilon_N\,|\, \{(\X_i,Y_i)\}_{i=1}^N\right)\xrightarrow{N\rightarrow\infty} 1. \label{eq:pc def}
\end{align}
Here the expectation is taken with respect to the data distribution   of $\{(\X_i,Y_i)\}_{i=1}^N$ determined by $f_0$ and the marginal of the $\X_i$'s,   
and $d_N$ is a suitable discrepancy measure. Roughly speaking,   $\varepsilon_N$  is the rate at which one can shrink the radius of a ball centered around the truth while at the same time capturing almost all the posterior mass. The condition \eqref{eq:pc def} implies that asymptotically the sequence of posteriors will be nearly supported on a ball of radius $O(\varepsilon_N)$ around $f_0$.  Therefore, $\varepsilon_N$ can be loosely interpreted as the convergence rate of the posteriors towards the truth.   An important consequence  \cite[Theorem 2.5]{ghosal2000convergence} is that the point estimator defined as  
\begin{align*}
    \widehat{f}_N=\underset{g}{\operatorname{arg\,max}}\, \Big[\Pi_N\left(f:d_N(f,g)\leq M\varepsilon_N\,|\,\{(\X_i,Y_i)\}_{i=1}^N\right)\Big],
\end{align*}
converges   (in probability)   to $f_0$ with the same rate $\varepsilon_N$. Therefore the contraction rate serves as a natural criterion for quantifying the estimation performance of the posteriors.

Following Subsection \ref{ssec: FEM for regression and classification}, the sequence of priors is taken as $\Pi_N=\operatorname{Law}(u_{h_N})$ (resp. $\operatorname{Law}(\Phi(u_{h_N}))$) for fixed design regression (resp. binary classification). The selection criterion for $h_N$ that we propose is to choose $h_N$ so that the sequence of posteriors with respect to $\Pi_N$ contracts at the same rate as if $\Pi_N\equiv \Pi:=\operatorname{Law}(u)$ (resp. $\operatorname{Law}(\Phi(u))$),   where $u$ is the Mat\'ern-type GP that $u_{h_N}$ is approximating.   It turns out that there is a simple condition on the    approximation accuracy of $u_{h_N}$ towards $u$   that guarantees this matching of posterior contraction rates, which we make precise below.

  We start by reviewing the key ingredients of the theory when a single prior is adopted, i.e., when $\Pi_N\equiv \Pi$ in the above.  Consider now $u$ as a GP taking values in $(L^{\infty}(\mathcal{D}), \|\cdot\|_{\infty})$ (see e.g. Lemma \ref{lemma:infty field rate rectangle} below for conditions under which this is valid)   for fixed design regression  and in $(L^2(\mathcal{D}), \|\cdot\|_2)$ for binary classification. By \cite[Theorems 3.2 and 3.3]{van2008rates}, the contraction rate with respect to $\Pi$ in the fixed design regression (resp. binary classification) setting can be characterized as the sequence $\varepsilon_N$ that satisfies $\phi_{f_0}(\varepsilon_N; u, \|\cdot\|_{\infty})\leq N\varepsilon_N^2$ (resp. $\phi_{\Phi^{-1}(f_0)}(\varepsilon_N; u, \|\cdot\|_2)\leq N\varepsilon_N^2$), where 
\begin{align}
    \phi_{\omega_0}(\varepsilon; u,\|\cdot\|_{\mathbb{B}}):=\underset{g\in \mathbb{H}:\|g-\omega_0\|_{\mathbb{B}}< \varepsilon}{\operatorname{inf}}\, \|g\|^2_{\mathbb{H}}-\log \mathbb{P}(\|u\|_{\mathbb{B}}<\varepsilon), \label{eq:concentration function}
\end{align}
and $(\mathbb{H},\|\cdot\|_{\mathbb{H}})$ denotes the \emph{reproducing kernel Hilbert space} (RKHS) of $\Pi$ (see e.g. \cite{van2008reproducing} for more details). Under such circumstances, the sequence of posteriors with respect to $\Pi$ contracts around $f_0$ with rate $\varepsilon_N$ in the sense of \eqref{eq:pc def} with $d_N=\|\cdot\|_N$ the empirical norm defined as $\|f\|^2_N=N^{-1}\sum_{i=1}^N |f(\X_i)|^2$ for fixed design regression and $d_N=\|\cdot\|_2$ for binary classification. In other words, the posterior contraction rate can be determined by analyzing the so-called \emph{concentration function} \eqref{eq:concentration function} of the prior.   Now when a sequence of priors $\Pi_N$ is used instead,   it is reasonable to expect that if $\Pi_N$ approximates $\Pi$ sufficiently well, the concentration functions of $\Pi_N$ will be close to that of $\Pi$ so that the same contraction rate can be achieved. Indeed this is implied by \cite[Theorems 2.2, 3.2 and 3.3]{van2008rates}, which we record as a proposition. 
\begin{proposition}\label{prop:prior approx implies same pc rate}
\begin{enumerate}
    \item Fixed design regression: Let $\Pi_N=\operatorname{Law}(u_{h_N})$. Suppose $\varepsilon_N$ is a  sequence of real numbers satisfying $\phi_{f_0}(\varepsilon_N; u, \|\cdot\|_{\infty})\leq N\varepsilon_N^2$ and 
    \begin{align}
    10\mathbb{E}\|u_{h_N}-u\|_{\infty}^2\leq N^{-1}. \label{eq:prior approximation condition infty}
\end{align}
Then, for any sufficiently large $M>0,$
\begin{align*}
    \mathbb{E}_{f_0} \Pi_N\left(f:\|f-f_0\|_N\leq M\varepsilon_N\,|\, \{(\X_i,Y_i)\}_{i=1}^N\right)\xrightarrow{N\rightarrow\infty} 1.
\end{align*}
    \item Binary classification: Let $\Pi_N=\operatorname{Law}(\Phi(u_{h_N}))$. Suppose $\varepsilon_N$ is a sequence of real numbers satisfying $\phi_{\Phi^{-1}(f_0)}(\varepsilon_N; u, \|\cdot\|_2)\leq N\varepsilon_N^2$ and 
    \begin{align}
    10\mathbb{E}\|u_{h_N}-u\|_2^2\leq N^{-1}. \label{eq:prior approximation condition 2}
\end{align}
Then, for any sufficiently large $M>0,$
\begin{align*}
    \mathbb{E}_{f_0} \Pi_N\left(f:\|f-f_0\|_2\leq M\varepsilon_N\,|\, \{(\X_i,Y_i)\}_{i=1}^N\right)\xrightarrow{N\rightarrow\infty} 1.
\end{align*}
\end{enumerate}
\end{proposition}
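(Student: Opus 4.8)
The plan is to regard $u_{h_N}$ as a Gaussian prior in its own right and to verify that it obeys the hypotheses of the general Gaussian-prior contraction theorems of \cite{van2008rates} (Theorems 2.2, 3.2 and 3.3) with the \emph{same} rate $\varepsilon_N$ that is assumed to work for $u$. In the Gaussian setting those theorems reduce posterior contraction at rate $\varepsilon_N$ to a single bound on the concentration function of the prior, namely $\phi_{f_0}(\varepsilon_N;u_{h_N},\|\cdot\|_{\infty})\lesssim N\varepsilon_N^2$ for regression and $\phi_{\Phi^{-1}(f_0)}(\varepsilon_N;u_{h_N},\|\cdot\|_2)\lesssim N\varepsilon_N^2$ for classification; the accompanying entropy and remaining-mass requirements are automatic consequences of the small-ball estimate through Borell's inequality. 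Thus the entire task is to transfer the hypothesized concentration-function bound from $u$ to $u_{h_N}$, up to harmless constant factors, using only the approximation estimates \eqref{eq:prior approximation condition infty}--\eqref{eq:prior approximation condition 2}. The metrics $\|\cdot\|_N$ and $\|\cdot\|_2$ in the conclusions are precisely the discrepancies for which Theorems 3.2 and 3.3 deliver contraction, so no further work is needed once the rate is matched.

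Recall that the concentration function \eqref{eq:concentration function} splits into a centered small-ball term $-\log\mathbb{P}(\|u_{h_N}\|<\varepsilon)$ and a decentering term $\inf\{\|g\|_{\mathbb{H}_N}^2:g\in\mathbb{H}_N,\ \|g-\omega_0\|<\varepsilon\}$, where $\mathbb{H}_N$ denotes the RKHS of $u_{h_N}$. For the small-ball term I would exploit that $u$ and $u_{h_N}-u$ are jointly centered Gaussian. From $\{\|u\|<\varepsilon\}\cap\{\|u_{h_N}-u\|<\varepsilon\}\subseteq\{\|u_{h_N}\|<2\varepsilon\}$ and the Gaussian correlation inequality applied to these two symmetric convex sets,
\begin{align*}
\mathbb{P}(\|u_{h_N}\|<2\varepsilon)\ \geq\ \mathbb{P}(\|u\|<\varepsilon)\,\mathbb{P}(\|u_{h_N}-u\|<\varepsilon).
\end{align*}
By Chebyshev and \eqref{eq:prior approximation condition infty} the second factor is at least $1-(10N\varepsilon^2)^{-1}$, which exceeds $1/2$ at $\varepsilon=\varepsilon_N$ for large $N$ since $N\varepsilon_N^2\to\infty$. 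Hence $-\log\mathbb{P}(\|u_{h_N}\|<2\varepsilon_N)\leq-\log\mathbb{P}(\|u\|<\varepsilon_N)+\log 2$, and the change of scale $\varepsilon_N\mapsto 2\varepsilon_N$ is absorbed by the regularity of the small-ball exponent. The essential point is that the small-ball probability is \emph{exponentially} small, so the naive bound $\mathbb{P}(\|u_{h_N}\|<2\varepsilon_N)\geq\mathbb{P}(\|u\|<\varepsilon_N)-\mathbb{P}(\|u_{h_N}-u\|\geq\varepsilon_N)$ is useless: its correction term is only polynomially small. The Gaussian correlation inequality is exactly what lets the tiny perturbation enter \emph{multiplicatively}.

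For the decentering term I would take a near-minimizer $g\in\mathbb{H}$ for $u$ with $\|g-\omega_0\|<\varepsilon_N$ and $\|g\|_{\mathbb{H}}^2\leq N\varepsilon_N^2$ --- available because $\phi_{\omega_0}(\varepsilon_N;u)\leq N\varepsilon_N^2$ --- and manufacture from it an element of $\mathbb{H}_N$ with comparable norm that still approximates $\omega_0$; for the finite element prior the natural candidate is the Galerkin projection of $g$ onto $V_{h_N}$, whose $\mathbb{H}_N$-norm is bounded in terms of $\|g\|_{\mathbb{H}}$ and whose approximation error is governed by the same finite element estimates that underlie \eqref{eq:prior approximation condition infty}--\eqref{eq:prior approximation condition 2}. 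Combining the two pieces gives $\phi_{\omega_0}(2\varepsilon_N;u_{h_N},\|\cdot\|)\lesssim N\varepsilon_N^2$, the input required by Theorems 3.2 and 3.3, and the two conclusions follow. I expect this decentering comparison to be the main obstacle: unlike the small-ball term it cannot be settled by a soft Gaussian inequality and instead needs quantitative control of how the RKHS elements of $u$ are reproduced inside $\mathbb{H}_N$ --- so it is here, rather than in the mere $L^2(\Omega)$-closeness of $u_{h_N}$ to $u$, that the finite element structure must be used. If, as seems likely, Theorem 2.2 of \cite{van2008rates} already packages this comparison for any sequence of Gaussian priors meeting \eqref{eq:prior approximation condition infty}--\eqref{eq:prior approximation condition 2}, then the proof collapses to checking that hypothesis and citing the theorem, with the argument above explaining why it holds.
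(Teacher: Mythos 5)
The paper does not prove this proposition at all: it is recorded verbatim as a consequence of \cite[Theorems 2.2, 3.2 and 3.3]{van2008rates}, and the hypotheses \eqref{eq:prior approximation condition infty}--\eqref{eq:prior approximation condition 2}, constant $10$ included, are copied from Theorem 2.2 of that reference. So the hedge in your final sentence is exactly right, and your proposal should be judged as a reconstruction of the cited Theorem 2.2. As such it is structurally sound: the problem does reduce to transferring the concentration function from $u$ to $u_{h_N}$, the entropy and remaining-mass conditions do follow from Borell's inequality once that is done, and you correctly diagnose why the naive bound $\mathbb{P}(A\cap B)\geq \mathbb{P}(A)-\mathbb{P}(B^c)$ is useless against an exponentially small ball. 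Your multiplicative fix for the centered small ball via the Gaussian correlation inequality is legitimate (the two events are symmetric and convex in the joint Gaussian space), though it is a heavier tool than the 2008 argument could have used.

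The one genuine weak point is the decentering step, and your instinct about it is backwards: it does \emph{not} require the finite element structure, and indeed it cannot, since the proposition is stated purely in terms of $\mathbb{E}\|u_{h_N}-u\|^2$ with no reference to Galerkin projections. The abstract argument is two lines. Write the near-minimizer as $g=\mathbb{E}[uZ]$ for $Z$ in the first Wiener chaos of $u$ with $\mathbb{E}Z^2=\|g\|_{\mathbb{H}}^2\leq N\varepsilon_N^2$ (the Lo\`eve isometry), and set $Tg:=\mathbb{E}[u_{h_N}Z]\in\mathbb{H}_N$. Then $\|Tg\|_{\mathbb{H}_N}\leq(\mathbb{E}Z^2)^{1/2}$, while by Cauchy--Schwarz and the bound $\sup_{\|b^*\|\leq 1}\mathbb{E}|b^*(u_{h_N}-u)|^2\leq\mathbb{E}\|u_{h_N}-u\|^2$ one gets
\begin{align*}
\|Tg-g\| \;=\; \sup_{\|b^*\|\leq 1}\bigl|\mathbb{E}\bigl[b^*(u_{h_N}-u)\,Z\bigr]\bigr| \;\leq\; \bigl(N\varepsilon_N^2\bigr)^{1/2}\bigl(\mathbb{E}\|u_{h_N}-u\|^2\bigr)^{1/2} \;\leq\; \varepsilon_N/\sqrt{10},
\end{align*}
which is precisely where the factor $10$ in \eqref{eq:prior approximation condition infty}--\eqref{eq:prior approximation condition 2} is spent. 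Your proposed Galerkin projection of $g$ would instead require bounding its $\mathbb{H}_N$-norm and its approximation error by FE-specific estimates that are not among the stated hypotheses, so that route, while probably salvageable in this concrete setting, is not the one the proposition is designed around.
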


Proposition \ref{prop:prior approx implies same pc rate} shows that the posteriors constructed with prior $\Pi_N$ and with prior $\Pi$ contract at the same rate, provided that the prior approximation is sufficiently accurate. 
Therefore it suffices to choose $h_N$ so that \eqref{eq:prior approximation condition infty} or \eqref{eq:prior approximation condition 2} is satisfied, giving a simple criterion for setting $h_N$.  In particular, if the error $\mathbb{E}\|u_{h_N}-u\|_{\infty}^2$ or $\mathbb{E}\|u_{h_N}-u\|_2^2$ decreases sufficiently fast, then a slowly decaying $h_N$ is enough and leads to $n_{h_N}\asymp h_N^{-D}\ll N$.   We will show in Section \ref{sec:main} for a simple linear FE method in a concrete setting that this is indeed the case under certain smoothness assumptions, and demonstrate such behavior through simulation studies in Section \ref{sec:numerics}. Several possible extensions will be discussed in Section \ref{sec:conclusion}, building on the key idea of using Proposition \ref{prop:prior approx implies same pc rate} to balance the numerical error in the prior approximation with the statistical errors in regression and classification tasks.

\section{Main Results}\label{sec:main}
In this section we obtain sufficient scalings of $n_h$ with respect to $N$ using spectral error analysis for FE eigenvalue problems and our criterion outlined in Subsection \ref{ssec:bg pc}. We
assume throughout that $\mathcal{D}=(0,L_1)\times\cdots\times (0,L_D)$ is a hyperrectangle and that the Laplacian in \eqref{eq:spde global} is supplemented with Neumann boundary condition, so that we have the following explicit expressions for its eigenvalues and eigenfunctions 
\begin{align}
    \Lambda_\i= \sum_{d=1}^D \frac{i_d\pi}{L_d},\quad \quad \Psi_\i(\x)=C_\i\prod_{d=1}^D \cos \left(\frac{i_d\pi x_d}{L_d}\right), \label{eq:Dd eigenpairs}
\end{align}
where $\i=(i_1,\ldots,i_D) \in \mathbb{N}^D$ is  a multi-index and $C_\i$'s are constants so that the $\Psi_\i$'s are $L^2(\mathcal{D})$-normalized. The Mat\'ern-type GP \eqref{eq:spde global} can then be written as
\begin{align}
    u=\kappa^{s-D/2}\sum_{\i\in\mathbb{N}^D} (\kappa^2+\Lambda_\i)^{-s/2} \xi_\i \Psi_\i,\quad \quad \xi_\i\overset{i.i.d.}{\sim} \mathcal{N}(0,1). \label{eq:approximate field rectangle}
\end{align}
The explicit expressions for the eigenpairs in \eqref{eq:Dd eigenpairs} allow us to establish the   following result \cite[Theorem 2.1]{khristenko2019analysis},   which shows that the covariance function of \eqref{eq:approximate field rectangle} is nearly indistinguishable from the classical Mat\'ern covariance function \eqref{eq:matern covariance}  away from the boundary. 
\begin{proposition}\label{prop:covariances are close}
Let $c(\x,\x')$ denote the covariance function of the Mat\'ern-type GP \eqref{eq:approximate field rectangle} and let $\cmat(\x,\x')$ be the Mat\'ern covariance function \eqref{eq:matern covariance} with 
\begin{align}
    \sigma^2 = \frac{\Gamma(s-\frac{D}{2})}{(4\pi)^{D/2}\Gamma(s)}. \label{eq:varaince formula}
\end{align}
Then 
\begin{align}
    c(\x, \x')=\sum_{\k\in\mathbb{Z}^D}\sum_{\mathbf{T}\in\mathcal{T}}\cmat( \mathbf{T}\x, \x'-2\k\mathbf{L}),\quad\quad \x, \x'\in (0,L_1)\times \cdots \times (0,L_D),\label{eq:Matern representation of rectangle field}
\end{align}
where $\mathcal{T}$ is the collection of all $D\times D$ diagonal matrices whose diagonal entries are either $1$ or $-1$, $\mathbf{T}\x$ denotes matrix-vector multiplication and $\k\mathbf{L}$ denotes $(k_1L_1,\ldots,k_DL_D)$.
\end{proposition}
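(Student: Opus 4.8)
The plan is to start from the Mercer (Karhunen--Lo\`eve) expansion of the covariance implied by \eqref{eq:approximate field rectangle}, namely
\[
c(\x,\x') = \kappa^{2s-D}\sum_{\i\in\mathbb{N}^D}(\kappa^2+\Lambda_\i)^{-s}\,\Psi_\i(\x)\,\Psi_\i(\x'),
\]
and to recognize the right-hand side as a $D$-fold tensor product of one-dimensional Fourier cosine series whose coefficients are the Mat\'ern spectral density sampled on the frequency lattice $\prod_{d}(\pi/L_d)\mathbb{Z}_{\geq 0}$. The goal is to convert this discrete spectral sum into the spatial form \eqref{eq:Matern representation of rectangle field} in two moves: a product-to-sum step that generates the reflections $\mathbf{T}\in\mathcal{T}$, and a Poisson summation step that generates the periodic translates $2\k\mathbf{L}$.

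First I would substitute the explicit eigenfunctions $\Psi_\i(\x)=C_\i\prod_d \cos(i_d\pi x_d/L_d)$ and apply, in each coordinate, the identity $\cos a\cos b=\tfrac12[\cos(a-b)+\cos(a+b)]$. Expanding the resulting product over the $D$ coordinates yields a sum over all sign choices, i.e.\ over all diagonal $\pm1$ matrices $\mathbf{T}\in\mathcal{T}$, of terms depending on $\mathbf{T}\x-\x'$; this accounts for the reflection matrices in the statement, using evenness of $\cmat$ to absorb the overall sign. The per-coordinate factors $\tfrac12$ combine with the normalization constants $C_\i$ so that, after symmetrizing each index set from $\mathbb{N}^D$ to $\Z^D$ (using evenness of both the summand and the cosines), each reflected contribution becomes a clean lattice sum over $\Z^D$ of complex exponentials weighted by $(\kappa^2+\Lambda_\i)^{-s}$. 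The slightly different normalization at $i_d=0$ is exactly absorbed by this symmetrization, since the $\pm m$ modes double the weight for $m\neq0$ while the $m=0$ mode is counted once.

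With each reflected contribution written as a lattice sum of the sampled spectral density against exponentials, I would then apply the $D$-dimensional Poisson summation formula. Summing the Mat\'ern spectral density $\kappa^{2s-D}(\kappa^2+|\boldsymbol{\omega}|^2)^{-s}$ over the frequency lattice of spacing $\pi/L_d$ and Fourier-inverting produces a periodization over the dual lattice of spatial period $2L_d$ per coordinate, whose points are precisely the translates $2\k\mathbf{L}$, $\k\in\Z^D$. The continuous inverse transform of this spectral density is, by the standard identity underlying the SPDE representation of the Mat\'ern field, exactly the classical covariance \eqref{eq:matern covariance}; the normalization \eqref{eq:varaince formula} for $\sigma^2$ is precisely what makes this identification hold with equality rather than up to a constant. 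Combining the reflections from the first step with the periodic translates from Poisson summation then assembles the double sum \eqref{eq:Matern representation of rectangle field}.

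The main obstacle I anticipate is the rigorous justification of Poisson summation and the interchange of summation with Fourier inversion: this requires sufficient decay and regularity of both the spectral density and its transform, which is guaranteed here by the standing assumption $s>D/2$ (ensuring the lattice sum converges absolutely and that $\cmat$ decays fast enough for the periodization to make sense). The remaining work is bookkeeping --- tracking the prefactor $\kappa^{2s-D}$, the constants $C_\i$, the per-coordinate factors $\tfrac12$, and the Fourier-transform convention so that everything collapses exactly onto \eqref{eq:varaince formula}, and confirming that the coordinatewise sign choices assemble into the full group $\mathcal{T}$ of diagonal $\pm1$ matrices. Since this is the content of \cite[Theorem 2.1]{khristenko2019analysis}, I would either invoke that computation directly or carry out the one-dimensional case explicitly and tensorize over the $D$ coordinates.
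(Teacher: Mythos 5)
The paper does not prove this proposition itself: it is imported verbatim from \cite[Theorem 2.1]{khristenko2019analysis}, and Section \ref{sec:proofs} contains no argument for it. Your sketch is a correct reconstruction of the argument in that reference: Mercer expansion of $c$, product-to-sum identities on the cosines to produce the reflections $\mathbf{T}\in\mathcal{T}$, symmetrization of the index set from $\mathbb{N}^D$ to $\Z^D$ (with the $i_d=0$ normalization absorbed exactly as you say), and multidimensional Poisson summation turning the lattice sum of the spectral density $\kappa^{2s-D}(\kappa^2+|\boldsymbol{\omega}|^2)^{-s}$ into the periodization over $2\k\mathbf{L}$, with \eqref{eq:varaince formula} emerging from the standard Fourier pair defining the Mat\'ern covariance. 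The one imprecision is your closing suggestion to ``carry out the one-dimensional case explicitly and tensorize'': the eigenfunctions tensorize but the spectral weight $(\kappa^2+\Lambda_\i)^{-s}$ does not factorize over coordinates, so the Poisson summation must be performed in $\mathbb{R}^D$ at once rather than coordinate-by-coordinate; this does not affect the validity of the argument, only that phrasing. Your noted caveat about justifying Poisson summation is handled by absolute convergence of both the lattice sum (from $s>D/2$) and the exponentially decaying periodization.
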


Note that if the correlation range $\rho=\sqrt{8\nu}/\kappa$ (where, recall, $\nu = s - D/2)$ is much smaller than $\operatorname{min}_{d} L_{d},$ and in addition $\x, \x'$ are at a distance larger than $2\rho$ from each side of the hyperrectangle,  the only significant term that remains in \eqref{eq:Matern representation of rectangle field}  is $\cmat(\x,\x')$. Therefore, \eqref{eq:approximate field rectangle} gives a good approximation of the classical Mat\'ern model away from the boundary. In practice one can choose a larger hyperrectangle than the domain of interest to reduce the boundary effect \cite{lindgren2011explicit}, see also \cite{khristenko2019analysis}. Our focus on hyperrectangles also facilitates the concrete FE construction and error analysis in the next subsection. 

\subsection{FEM Construction and Spectral Error Bounds}\label{sec:fem hyper rectangle}
We shall construct the FE space on $[0,L_1]\times\cdots \times [0,L_D]$ as the tensor product of FE spaces on each interval $[0,L_d]$. To begin with, let $P$ be a uniform partition of $[0,L_d]$ into $K+1$ points with width $h=L/K$ and let $V_{h}$ be the space of continuous piecewise linear functions with respect to $P$. To simplify the notation we drop the dependence on $d$ below. Precisely, a basis of $V_h$ consists of 
\begin{align*}
    e_{h,k}=\begin{cases}
    h^{-1}x-k+1 \quad & x\in[(k-1)h,kh]\\
    -h^{-1}x+k+1  \quad & x\in[kh,(k+1)h]\\
    0 \quad & \operatorname{otherwise}
    \end{cases},
    \quad k=1,\ldots,K-1,
\end{align*}
with $e_{h,0}=(-h^{-1}x+1)\mathbf{1}_{[0,h]}$ and $e_{K,h}=(h^{-1}x-K+1)\mathbf{1}_{[(K-1)h,K h]}$. Let $\mathcal{J}_h$ be the Galerkin discretization of $\smash{\kappa^2-\frac{d^2}{dx^2}}$ over $V_h$. The eigenvalues $\{\lambda_{h,i}\}_{i=0}^K$ and eigenfunctions $\{\psi_{h,i}\}_{i=0}^K$ of $\mathcal{J}_h$ can be found by solving the generalized eigenvalue problem 
\begin{align*}
    \G \z=\lambda \M \z,
\end{align*}
where $\z$ represents the coordinates of $\psi_{h,i}$ in terms of the $e_{h,k}$'s and $\G, \M\in\mathbb{R}^{(K+1)\times (K+1)}$ are matrices with entries
\begin{align*}
    \G_{ij}=\frac{1}{h} \cdot\begin{cases}
    2\quad & i=j \notin \{1,K+1\}\\
    1\quad & i=j \in \{1,K+1\}\\
    -1\quad & |i-j|=1 \\
    0 \quad &\operatorname{otherwise}
    \end{cases},
    \quad \quad 
    \M_{ij}=h \cdot \begin{cases}
    2/3\quad & i=j \notin \{1,K+1\}\\
    1/3\quad & i=j \in \{1,K+1\}\\
    1/6\quad & |i-j|=1 \\
    0 \quad &\operatorname{otherwise}
    \end{cases}.
\end{align*}
One can check that 
\begin{align}
    \lambda_{h,i}=\frac{6}{h^2}\frac{1-\cos\left(i\pi h/L\right)}{2+\cos\left(i\pi h/L\right)}, \quad \psi_{h,i}=c_i\sum_{k=0}^{K}\cos\left(\frac{ki\pi h}{L}\right)e_{h,k}  ,\quad\quad   i,k\in[K], \label{eq:1d fem eigenpairs}
\end{align}
where   $c_i$'s are normalizing constants so that $\psi_{h,i}$ has $L^2(\mathcal{D})$ norm one, and $[K] = \{ 0, \ldots, K \}.$  We then have the following error estimates: 
\begin{lemma}\label{lemma:spectral bound 1d}
Let $\{(\lambda_i,\psi_i)\}_{i=1}^{\infty}$ be the eigenvalues and $L^2(\mathcal{D})$-orthonormal eigenfunctions of $\kappa^2-\frac{d^2}{dx^2}$ over $(0,L)$ with Neumann boundary condition. There is a constant $C$ so that, for $i\in[K]$,  
\begin{align*}
    |\lambda_{h,i}-\lambda_i|\leq C\lambda_i^2h^2, \quad \quad \|\psi_{h,i}-\psi_i\|_{\infty}\leq C\lambda_i h^2.
\end{align*}
Furthermore the $\psi_{h,i}$'s are also $L^2(\mathcal{D})$-orthonormal.
\end{lemma}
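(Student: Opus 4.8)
The plan is to work entirely with the explicit formulas in \eqref{eq:1d fem eigenpairs}, exploiting one structural observation: on the uniform mesh with nodes $x_k = kh$, $h=L/K$, one has $\cos(ki\pi h/L) = \cos(i\pi x_k/L)$, so the FE eigenfunction coincides at every node with $c_i\phi_i$, where $\phi_i(x):=\cos(i\pi x/L)$, and being piecewise linear it equals the nodal interpolant: $\psi_{h,i} = c_i I_h\phi_i$, with $I_h$ the piecewise-linear interpolation operator. Setting $\omega := i\pi/L$ and $\theta := \omega h = i\pi/K \in [0,\pi]$, the continuous eigenpairs are $\lambda_i = \kappa^2 + \omega^2$ and $\psi_i = c_i^{\mathrm{cont}}\phi_i$ with normalizer $c_i^{\mathrm{cont}} \asymp 1$. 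Since the Galerkin discretization of $\kappa^2-\tfrac{d^2}{dx^2}$ only shifts the spectrum of $-\Delta_h$ by $\kappa^2$, this additive constant cancels in $\lambda_{h,i}-\lambda_i$, which therefore equals $h^{-2}g(\theta)$ with $g(\theta):= 6(1-\cos\theta)/(2+\cos\theta) - \theta^2$.

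For the eigenvalue estimate I would show $|g(\theta)| \leq C\theta^4$ on $[0,\pi]$. Because $\cos\theta \geq -1$ there, the denominator satisfies $2+\cos\theta \geq 1$, so $g$ is smooth on $[0,\pi]$; a Taylor expansion at the origin gives $g(\theta) = \theta^4/12 + O(\theta^6)$, i.e. $g$ has a zero of order four at $0$, so $g(\theta)/\theta^4$ extends continuously to all of $[0,\pi]$ and is bounded there by compactness. This yields $|\lambda_{h,i}-\lambda_i| = h^{-2}|g(\theta)| \leq C\theta^4 h^{-2} = C\omega^4 h^2 \leq C\lambda_i^2 h^2$, the last step using $\omega^2 \leq \lambda_i$. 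The argument is uniform over all $i\in[K]$.

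For the eigenfunction estimate I would first pin down the normalizer. Writing $\zeta_i = (\cos(ki\pi/K))_{k=0}^K$ for the nodal coefficient vector, a direct evaluation of the discrete trigonometric sums gives $\|I_h\phi_i\|_{L^2}^2 = \zeta_i^\top \M \zeta_i = \tfrac{L}{6}(2+\cos\theta)$ for $1\le i\le K-1$ (with endpoint values $L$ and $L/3$ at $i=0,K$), so $\|I_h\phi_i\|_{L^2}^2 \in [L/6, L]$ and $c_i = \|I_h\phi_i\|_{L^2}^{-1} \asymp 1$ uniformly. I would then split $\psi_{h,i}-\psi_i = c_i(I_h\phi_i-\phi_i) + (c_i - c_i^{\mathrm{cont}})\phi_i$ and argue by frequency. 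When $\theta \leq 1$, the standard interpolation bound gives $\|I_h\phi_i-\phi_i\|_\infty \leq \tfrac{h^2}{8}\|\phi_i''\|_\infty = \tfrac18\omega^2 h^2$, while the normalization gap obeys $|c_i - c_i^{\mathrm{cont}}| \lesssim \theta^2$ since $\sqrt{3/(2+\cos\theta)}-1$ has a double zero at $0$; as $\|\phi_i\|_\infty = 1$, both contributions are $\lesssim \omega^2 h^2 \leq \lambda_i h^2$. When $\theta > 1$ we instead have $\lambda_i h^2 \geq \omega^2 h^2 = \theta^2 > 1$, so the target is of unit order, and here I use only that $\|\psi_{h,i}\|_\infty = c_i$ and $\|\psi_i\|_\infty = c_i^{\mathrm{cont}}$ are uniformly bounded to get $\|\psi_{h,i}-\psi_i\|_\infty \leq c_i + c_i^{\mathrm{cont}} \leq C \leq C\lambda_i h^2$.

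Finally, I would deduce orthonormality from the variational structure: the $\psi_{h,i}$ are eigenvectors of the symmetric generalized problem $\G \zeta = \lambda \M \zeta$ with $\M \succ 0$, and since $\theta \mapsto (1-\cos\theta)/(2+\cos\theta)$ is strictly increasing on $[0,\pi]$ the eigenvalues are simple, so distinct eigenvectors are $\M$-orthogonal; as $\langle \psi_{h,i},\psi_{h,j}\rangle = \zeta_i^\top \M \zeta_j$, this is exactly $L^2$-orthogonality, and the $c_i$ enforce unit norm. I expect the main obstacle to be the high-frequency regime $\theta \gtrsim 1$ of the eigenfunction bound, where Taylor/interpolation estimates no longer produce smallness; the crux is the explicit lower bound $\|I_h\phi_i\|_{L^2}^2 \geq L/6$, which controls $c_i$ and hence $\|\psi_{h,i}\|_\infty$ uniformly and makes the (now unit-order) target $C\lambda_i h^2$ attainable without relying on any cancellation.
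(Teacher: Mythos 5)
Your proposal is correct and follows essentially the same route as the paper's proof: Taylor expansion of the explicit eigenvalue formula, identification of $\psi_{h,i}$ with a rescaled nodal interpolant of $\cos(i\pi x/L)$ so that the $L^\infty$ error splits into an interpolation error plus a normalization mismatch, and evaluation of the discrete trigonometric sums to control the normalizers $c_i$. The only substantive divergence is that you obtain orthogonality from the symmetry of the generalized eigenproblem $\G\z=\lambda\M\z$ together with simplicity of the eigenvalues (monotonicity of $\theta\mapsto(1-\cos\theta)/(2+\cos\theta)$), whereas the paper computes all cross inner products $\langle\psi_{h,i},\psi_{h,j}\rangle$ directly; your variant is slightly cleaner and handles the endpoint indices $i\in\{0,K\}$ more carefully, but both still require the same diagonal norm computation.
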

  
\begin{remark}
Eigenvalue estimates and eigenfunction estimates in $L^2$ norm can be found for instance in  \cite[Theorems 6.1 and 6.2]{strang1973analysis}, where more general elliptic operators and domains are considered. However, for our subsequent developments we need eigenfunction estimates in $L^{\infty}$ norm, and for this reason we include an elementary proof of Lemma \ref{lemma:spectral bound 1d} in Section  \ref{sec:proofs}. \qed
\end{remark} 

For Galerkin discretization of  $\kappa^2-\Delta$ on $[0,L_1]\times \cdots \times [0,L_D]$, let $\mathcal{P}$ be the uniform grid constructed by uniformly partitioning each interval with $K_d+1$ nodes so that $h_d=L_d/K_d$ in each dimension. Define for $\mathbf{h}=(h_1,\ldots,h_D)$ the FE space  
\begin{align*}
    \mathcal{V}_{\h}=V_{h_1} \otimes \cdots\otimes V_{h_d}:=
    \left\{v(\x)=\prod_{d=1}^D v_{h_d}(x_d): v_{h_d}\in V_{h_d} \right\},
\end{align*}
where $V_{h_d}$ is the FE space on $[0,L_d]$ constructed above. It can be shown that the eigenvalues $\Lambda_{\h,\i}$ and  eigenfunctions $\Psi_{\h,\i}$ of $\mathcal{L}_{\h}$ (the Galerkin discretization of $\kappa^2-\Delta$)  are
\begin{align*}
    \Lambda_{\h,\i}=\sum_{d=1}^D \lambda_{h_d,i_d},\quad \Psi_{\h,\i}(\x)=\prod_{d=1}^D \psi_{h_d,i_d}(x_d), \quad \quad \i\in [K_1]\times\cdots\times[K_D],
\end{align*}
where the $\lambda_{h_d,i_d}$'s and $\psi_{h_d,i_d}$'s are as in \eqref{eq:1d fem eigenpairs}. Indeed for $v_{\h}(\x)=\prod_{d=1}^D v_{h_d}(x_d)\in\mathcal{V}_{\h}$ we have that 
\begin{align*}
    \langle \nabla \Psi_{\h,\i}, \nabla v_{\h}\rangle &= \int_{\mathcal{D}}  \sum_{d=1}^D \Big(\psi^{\prime}_{h_d,i_d} v^{\prime}_{h_d} \prod_{\ell\neq d}  \psi_{h_{\ell},i_{\ell}} v_{h_{\ell}} \Big)d\x\\
    &= \sum_{d=1}^D \langle \psi^{\prime}_{h_d,i_d}, v^{\prime}_{h_d}\rangle \prod_{ \ell\neq d} \langle \psi_{h_{\ell},i_{\ell}},v_{h_{\ell}}\rangle =\sum_{d=1}^D \lambda_{h_d,i_d} \prod_{\ell=1}^D \langle \psi_{h_{\ell},i_{\ell}}, v_{h_{\ell}} \rangle=\sum_{d=1}^D \lambda_{h_d,i_d} \langle \Psi_{\h,\i},v_{\h}\rangle,
\end{align*}
where the primes denote weak derivatives. Moreover the $\Psi_{\h,\i}$'s are orthonormal since the $\psi_{h,i}$'s are and hence they form a complete set of eigenbasis for $\mathcal{L}_{\h}$. 
The following error estimates are immediate, where we recall that the true eigenpairs are given in \eqref{eq:Dd eigenpairs}: 
\begin{lemma}\label{lemma:rectangle FEM eigenfucntion infty bound}
 For $\i\in[K_1]\times\cdots\times[K_D]$ we have 
\begin{align*}
    |\Lambda_{\h,\i}-\Lambda_\i|\leq C\Lambda_\i^2h^2,\quad \quad \|\Psi_{\h,\i}-\Psi_\i\|_{\infty} \leq C\Lambda_\i h^2,
\end{align*}
where $h=\operatorname{max}_d h_d$ and $C$ is a constant depending only on $D$ and the $L_d$'s.
\end{lemma}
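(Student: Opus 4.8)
The plan is to exploit the tensor-product structure of the construction: both the exact eigenpairs \eqref{eq:Dd eigenpairs} and the FE eigenpairs factorize across coordinates, with $\Lambda_{\h,\i}=\sum_{d=1}^D \lambda_{h_d,i_d}$, $\Lambda_\i=\sum_{d=1}^D \lambda_{i_d}$, $\Psi_{\h,\i}(\x)=\prod_{d=1}^D \psi_{h_d,i_d}(x_d)$ and $\Psi_\i(\x)=\prod_{d=1}^D \psi_{i_d}(x_d)$. Consequently the $D$-dimensional estimates reduce to the one-dimensional bounds already recorded in Lemma \ref{lemma:spectral bound 1d}, applied separately in each direction, together with two elementary reductions: each exact one-dimensional eigenvalue satisfies $\lambda_{i_d}\le \Lambda_\i$ (it is one of the nonnegative summands making up $\Lambda_\i$), and each $h_d\le h=\operatorname{max}_d h_d$.

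For the eigenvalue bound I would simply use additivity. The triangle inequality and Lemma \ref{lemma:spectral bound 1d} give
\[
|\Lambda_{\h,\i}-\Lambda_\i|\le \sum_{d=1}^D |\lambda_{h_d,i_d}-\lambda_{i_d}| \le C\sum_{d=1}^D \lambda_{i_d}^2 h_d^2 \le C\sum_{d=1}^D \Lambda_\i^2 h^2 = CD\,\Lambda_\i^2 h^2,
\]
where the factor $D$ is absorbed into the final constant.

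For the eigenfunction bound I would expand the difference of products with the standard telescoping identity
\[
\prod_{d=1}^D a_d - \prod_{d=1}^D b_d = \sum_{d=1}^D\Big(\prod_{\ell<d} b_\ell\Big)(a_d-b_d)\Big(\prod_{\ell>d} a_\ell\Big),
\]
taking $a_d=\psi_{h_d,i_d}(x_d)$ and $b_d=\psi_{i_d}(x_d)$. Passing to the supremum over $\x$ and bounding each factor in $L^\infty(\mathcal{D})$ yields
\[
\|\Psi_{\h,\i}-\Psi_\i\|_{\infty}\le \sum_{d=1}^D\Big(\prod_{\ell<d}\|\psi_{i_\ell}\|_{\infty}\Big)\,\|\psi_{h_d,i_d}-\psi_{i_d}\|_{\infty}\,\Big(\prod_{\ell>d}\|\psi_{h_\ell,i_\ell}\|_{\infty}\Big).
\]
Inserting the one-dimensional eigenfunction estimate $\|\psi_{h_d,i_d}-\psi_{i_d}\|_{\infty}\le C\lambda_{i_d}h_d^2 \le C\Lambda_\i h^2$ for the middle factor then produces the claimed $\Lambda_\i h^2$ scaling, provided the surrounding products are bounded by a constant independent of $\i$ and $\h$.

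Supplying that uniform bound is the one step requiring a word of care, and is where I expect the only (mild) difficulty. The exact factors are explicit, $\psi_{i_\ell}=C_{i_\ell}\cos(i_\ell\pi x_\ell/L_\ell)$, so $\|\psi_{i_\ell}\|_{\infty}\le\sqrt{2/L_\ell}$ uniformly in the index. For the FE factors I would use the triangle inequality with the one-dimensional error bound, $\|\psi_{h_\ell,i_\ell}\|_{\infty}\le\|\psi_{i_\ell}\|_{\infty}+C\lambda_{i_\ell}h_\ell^2$, and note that the constraint $i_\ell\in[K_\ell]$ forces $\lambda_{i_\ell}h_\ell^2\lesssim 1$, since $(i_\ell\pi/L_\ell)^2 h_\ell^2\le\pi^2$ whenever $i_\ell\le K_\ell=L_\ell/h_\ell$. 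Hence each FE factor is also bounded by a constant depending only on $L_\ell$, every product appearing above is bounded by a constant depending only on $D$ and the $L_d$'s, and collecting all index- and mesh-independent constants into a single $C$ depending only on $D$ and the $L_d$'s completes the proof of both estimates.
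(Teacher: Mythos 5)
Your proof is correct and follows essentially the same route as the paper's: the triangle inequality for the eigenvalue sums and the telescoping product identity for the eigenfunctions, reducing everything to the one-dimensional estimates of Lemma \ref{lemma:spectral bound 1d}. Your extra step verifying that the factors $\|\psi_{i_\ell}\|_{\infty}$ and $\|\psi_{h_\ell,i_\ell}\|_{\infty}$ are bounded uniformly in the index (using $i_\ell\le K_\ell=L_\ell/h_\ell$) is a detail the paper leaves implicit, and it is handled correctly.
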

\begin{remark}
Since $\mathcal{D}$ is a bounded domain, we obtain also the $L^2(\mathcal{D})$ bound $\|\Psi_{\h,\i}-\Psi_\i\|_2 \leq C\Lambda_\i h^2$.  \qed
\end{remark}
Since the approximation error in Lemma \ref{lemma:rectangle FEM eigenfucntion infty bound} depends on $h=\operatorname{max}_d h_d$, we shall from now on assume that the $h_d$'s are chosen so that they are of the same order, i.e., $\operatorname{max}_{j\neq k}\frac{h_j}{h_k}=O(1)$ as $h\rightarrow 0$, and treat only $h$ as the mesh size. As a consequence the total number of grid points satisfies the following scaling
\begin{align}
    n_{\h}=\prod_{d=1}^D (L_d/h_d+1)\asymp h^{-D}. \label{eq:total number of grid points}
\end{align}

\subsection{Balancing Numerical and Statistical Errors}
Now we use the spectral error bounds in Lemma \ref{lemma:rectangle FEM eigenfucntion infty bound} to obtain $L^2(\mathcal{D})$ and $L^\infty(\mathcal{D})$ error bounds for FE representations of Mat\'ern-type GP priors (Lemma \ref{lemma:infty field rate rectangle}). These prior bounds, combined with Proposition \ref{prop:prior approx implies same pc rate}, will yield our main result (Theorem \ref{thm:rectangle}). Let 
\begin{align*}
    u_{\h}=\kappa^{s-D/2}\sum_{\i\in[K_1]\times\cdots\times[K_D]}(\kappa^2+\Lambda_{\h,\i})^{-s/2}\xi_{\i} \Psi_{\h,\i},\quad \quad \xi_{\i}\overset{i.i.d.}{\sim} \mathcal{N}(0,1),
\end{align*}
be the FE representation of the Mat\'ern-type GP $u$  in \eqref{eq:approximate field rectangle}. Recall that we are interested in estimating the function $f_0(\x)=\mathbb{E}[Y|\X=\x]$ based on i.i.d. samples $\{(\X_i,Y_i)\}_{i=1}^N$ with prior $\Pi_N=\operatorname{Law}\bigl(u_{\h_N}\bigr)$ for the fixed design regression setting and $\Pi_N=\operatorname{Law}\bigl(\Phi(u_{\h_N})\bigr)$ for the binary classification setting, where $\h_N=(h_{N,1}, \ldots,h_{N,D})$ is to be determined. Based on the discussion in Subsection \ref{ssec:bg pc}, it suffices to quantify the approximation error of $u$ defined in \eqref{eq:approximate field rectangle} by $u_{\h_N}$.

\begin{lemma}\label{lemma:infty field rate rectangle}
Recall that $h=\operatorname{max}_d h_d$. Suppose $s>D/2$. It holds that
\begin{align*}
    \mathbb{E}\|u_{\h}-u\|_2^2\leq C\kappa^{2s-D}h^{(2s-D)\wedge 4},
\end{align*}
where $C$ is a constant independent of $\kappa$ and $h$. 
Furthermore the Mat\'ern-type GP $u$ defined in \eqref{eq:approximate field rectangle} belongs almost surely to $\mathcal{C}^{\beta}(\mathcal{D})$ for $0<\beta<1\wedge (s-D/2)$. Moreover, for $s>D$  it holds that
\begin{align*}
    \mathbb{E}\|u_{\h}-u\|_{\infty}^2\leq C \kappa^{2s-D} h^{(2s-2D)\wedge 4},
\end{align*}
where $C$ is a constant independent of $\kappa$ and $h$. 
\end{lemma}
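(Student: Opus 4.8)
The plan is to couple the two expansions by using the \emph{same} Gaussian coefficients $\xi_\i$ in $u$ and $u_\h$, and, writing $\mathcal{K}:=[K_1]\times\cdots\times[K_D]$, $a_\i:=(\kappa^2+\Lambda_\i)^{-s/2}$ and $a_{\h,\i}:=(\kappa^2+\Lambda_{\h,\i})^{-s/2}$, to split
\[
\kappa^{-(s-D/2)}(u_\h-u)=\underbrace{\sum_{\i\in\mathcal{K}}\bigl(a_{\h,\i}\Psi_{\h,\i}-a_\i\Psi_\i\bigr)\xi_\i}_{(\mathrm{B})}-\underbrace{\sum_{\i\notin\mathcal{K}}a_\i\Psi_\i\,\xi_\i}_{(\mathrm{A})}
\]
into a retained-mode error $(\mathrm B)$ and a truncation tail $(\mathrm A)$. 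Three facts will be used throughout: the cosine eigenfunctions satisfy $\|\Psi_\i\|_\infty\lesssim 1$ uniformly in $\i$ (the normalizing constants $C_\i$ are bounded); the Galerkin/Ritz min--max principle gives the one-sided bound $\Lambda_{\h,\i}\ge\Lambda_\i$, hence $a_{\h,\i}\le a_\i$; and applying the mean value theorem to $t\mapsto(\kappa^2+t)^{-s/2}$ (whose derivative decreases in magnitude, so is evaluated at the smaller argument $\Lambda_\i$) together with Lemma~\ref{lemma:rectangle FEM eigenfucntion infty bound} yields the per-mode estimates $|a_{\h,\i}-a_\i|\lesssim(\kappa^2+\Lambda_\i)^{-s/2}\Lambda_\i h^2$ and $a_{\h,\i}\|\Psi_{\h,\i}-\Psi_\i\|_{\bullet}\lesssim(\kappa^2+\Lambda_\i)^{-s/2}\Lambda_\i h^2$ in both $\bullet=2$ and $\bullet=\infty$.

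For the $L^2$ bound I would use independence of the $\xi_\i$ to get $\mathbb{E}\|(\mathrm A)\|_2^2=\sum_{\i\notin\mathcal K}a_\i^2$ (since $\|\Psi_\i\|_2=1$) and $\mathbb{E}\|(\mathrm B)\|_2^2=\sum_{\i\in\mathcal K}\|a_{\h,\i}\Psi_{\h,\i}-a_\i\Psi_\i\|_2^2$, all cross terms vanishing in expectation. Bounding $a_\i^2\le\Lambda_\i^{-s}$ and noting $\i\notin\mathcal K\Rightarrow\Lambda_\i\gtrsim h^{-2}$, Weyl's law turns the tail into $\int_{h^{-2}}^{\infty}\lambda^{D/2-1-s}\,d\lambda\asymp h^{2s-D}$ (finite since $s>D/2$). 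For $(\mathrm B)$ a triangle inequality plus the per-mode estimates bound each summand by $C(\kappa^2+\Lambda_\i)^{-s}\Lambda_\i^2h^4\le C\Lambda_\i^{2-s}h^4$, and $\sum_{\i\in\mathcal K}\Lambda_\i^{2-s}\asymp\int^{h^{-2}}\lambda^{D/2+1-s}\,d\lambda$ is $O(1)$ for $s>2+D/2$ and $\asymp h^{2s-D-4}$ otherwise; multiplying by $h^4$ gives $h^{(2s-D)\wedge 4}$. Reinstating $\kappa^{2s-D}$ yields the first claim (the borderline $s=2+D/2$ costs only a logarithm, absorbed by $\lesssim$).

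For the sample-path regularity I would avoid the lossy embedding $H^{s-D/2}\hookrightarrow\mathcal C^\beta$ and estimate Gaussian increments directly. Since $|\cos a-\cos b|\le|a-b|\wedge 2$ and $\Lambda_\i\asymp|\i|^2$, the product rule gives $|\Psi_\i(\x)-\Psi_\i(\x')|^2\lesssim 1\wedge\bigl(\Lambda_\i|\x-\x'|^2\bigr)$, so for any $\theta\in[0,1]$ the interpolation $1\wedge z\le z^\theta$ gives
\[
\mathbb{E}|u(\x)-u(\x')|^2=\kappa^{2s-D}\sum_\i(\kappa^2+\Lambda_\i)^{-s}|\Psi_\i(\x)-\Psi_\i(\x')|^2\lesssim|\x-\x'|^{2\theta}\sum_\i\Lambda_\i^{\theta-s},
\]
and the last sum converges exactly when $\theta<s-D/2$. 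Choosing $\theta<1\wedge(s-D/2)$ and using Gaussianity (all increment moments are powers of the variance) in Kolmogorov's continuity theorem gives $u\in\mathcal C^\beta$ a.s.\ for every $\beta<\theta$, hence for every $\beta<1\wedge(s-D/2)$.

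The $L^\infty$ bound is the crux, since no Parseval identity is available; here I would settle for the \emph{crude} estimate that discards cancellation: for $\sum_\i d_\i\xi_\i$ one has $\mathbb{E}\|\sum_\i d_\i\xi_\i\|_\infty^2\le\sum_{\i,\i'}\|d_\i\|_\infty\|d_{\i'}\|_\infty\,\mathbb{E}|\xi_\i\xi_{\i'}|\le\bigl(\sum_\i\|d_\i\|_\infty\bigr)^2$. Applying this to $(\mathrm A)$ and $(\mathrm B)$ with the $L^\infty$ per-mode estimates, the tail contributes $\bigl(\sum_{\i\notin\mathcal K}a_\i\bigr)^2\lesssim\bigl(\sum_{\Lambda_\i\gtrsim h^{-2}}\Lambda_\i^{-s/2}\bigr)^2\asymp h^{2s-2D}$ --- here the required $\ell^1$-summability, i.e.\ convergence of $\int^{\infty}\lambda^{D/2-1-s/2}\,d\lambda$, is precisely what forces the hypothesis $s>D$ --- while the retained modes contribute $\bigl(h^2\sum_{\i\in\mathcal K}\Lambda_\i^{1-s/2}\bigr)^2\lesssim h^{(2s-2D)\wedge 4}$ by the same Weyl computation. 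Reinstating $\kappa^{2s-D}$ completes the proof. The main obstacle is exactly this step: the triangle-inequality bound is wasteful, and it is only the combination of the sharp $L^\infty$ eigenfunction estimate of Lemma~\ref{lemma:rectangle FEM eigenfucntion infty bound} with the $\ell^1$ threshold $s>D$ that keeps the resulting rate nontrivial; sharpening it would require chaining or metric-entropy arguments that I would not pursue here.
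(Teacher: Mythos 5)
Your proof is correct and follows essentially the same route as the paper's: truncation tail plus retained-mode spectral errors controlled by Lemma \ref{lemma:rectangle FEM eigenfucntion infty bound} and Weyl's law, the crude $\ell^1$ bound for the $L^\infty$ norm (with $s>D$ entering exactly where you say), and a Gaussian increment bound fed into Kolmogorov's continuity theorem for the H\"older claim. The only differences are cosmetic — you merge the paper's three-term decomposition (separating eigenvalue and eigenfunction errors) into two, and you actually write out the $L^2$ estimate that the paper delegates to the techniques of Bolin et al.
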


\begin{remark}
The $L^2$ error bound has been shown to hold in greater generality, see e.g. \cite[Theorem 2.10]{bolin2020numerical} and \cite[Theorem 2]{cox2020regularity}. A sharper $L^{\infty}$ error bound was shown in \cite[Theorem 3]{cox2020regularity} when $D=1$, while our result holds for general dimension $D$. 
\end{remark}

As a corollary of Proposition \ref{prop:prior approx implies same pc rate} we have the following main result, presented in terms of the scaling of $h_N=\operatorname{max}_d h_{N,d}$. Notice that the concentration function defined in \eqref{eq:concentration function} depends implicitly on $s$ through $u$.

\begin{theorem}\label{thm:rectangle}
\begin{enumerate}
    \item Fixed design regression: Consider the Mat\'ern-type GP $u$ defined by \eqref{eq:KL expansion true field} with $s>D$. Suppose $\varepsilon_N$ satisfies $\phi_{f_0}(\varepsilon_N;u, \|\cdot\|_{\infty})\leq N\varepsilon_N^2$.  Set 
\begin{align}
    h_{N}\asymp 
    N^{-\frac{1}{(2s-2D)\wedge 4}} 
     \label{eq:scaling of hn rectangle L infty}
\end{align} 
with a large enough proportion constant. 
Then, for any sufficiently large $M>0,$
\begin{align*}
    \mathbb{E}_{f_0} \Pi_N\left(f:\|f-f_0\|_N\leq M\varepsilon_N\,|\,\{(\X_i,Y_i)\}_{i=1}^N\right)\xrightarrow{N\rightarrow\infty} 1,
\end{align*}
where we recall $\|f\|_N^2=N^{-1}\sum_{i=1}^N |f(\X_i)|^2$. 

\item Binary classification: Consider the Mat\'ern-type GP $u$ defined by \eqref{eq:KL expansion true field} with $s>D/2$. Suppose $\varepsilon_N$ satisfies $\phi_{\Phi^{-1}(f_0)}(\varepsilon_N; u,\|\cdot\|_2)\leq N\varepsilon_N^2$.  Set 
\begin{align}
    h_{N}\asymp 
    N^{-\frac{1}{(2s-D)\wedge 4}}
     \label{eq:scaling of hn rectangle L2}
\end{align} 
with a large enough proportion constant. 
Then, for any sufficiently large $M>0,$
\begin{align*}
    \mathbb{E}_{f_0} \Pi_N\left(f:\|f-f_0\|_2\leq M\varepsilon_N\,|\,\{(\X_i,Y_i)\}_{i=1}^N\right)\xrightarrow{N\rightarrow\infty} 1.
\end{align*}
\end{enumerate}
\end{theorem}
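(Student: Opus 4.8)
The plan is to derive Theorem~\ref{thm:rectangle} as a direct corollary of Proposition~\ref{prop:prior approx implies same pc rate}, using the finite element approximation bounds of Lemma~\ref{lemma:infty field rate rectangle} to verify the prior-approximation hypotheses \eqref{eq:prior approximation condition infty} and \eqref{eq:prior approximation condition 2}. The first thing I would record is that in the present hyperrectangle-with-Neumann setting the field \eqref{eq:approximate field rectangle} is exactly the specialization of the Mat\'ern-type GP \eqref{eq:KL expansion true field}, so the estimates of Lemma~\ref{lemma:infty field rate rectangle} apply verbatim to the $u$ appearing in the theorem. Since the concentration-function hypotheses on $\varepsilon_N$ are assumed outright and concern only the fixed limiting prior $u$, the entire task reduces to checking, for the prescribed scaling of $h_N$, that the relevant mean-square prior approximation error is at most $\tfrac{1}{10}N^{-1}$; once this is in place the conclusion is immediate from the matching part of Proposition~\ref{prop:prior approx implies same pc rate}.

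For the fixed design regression claim I would start from the $L^\infty$ estimate of Lemma~\ref{lemma:infty field rate rectangle}, which is available precisely because $s>D$, namely $\mathbb{E}\|u_{\h}-u\|_\infty^2\le C\kappa^{2s-D}h^{(2s-2D)\wedge 4}$. Writing $h_N=c\,N^{-1/((2s-2D)\wedge 4)}$ and substituting shows that the exponent of $N$ collapses to exactly $-1$, giving $\mathbb{E}\|u_{\h_N}-u\|_\infty^2\le C\kappa^{2s-D}c^{(2s-2D)\wedge 4}N^{-1}$. It then suffices to choose the proportionality constant $c$ so that $10\,C\kappa^{2s-D}c^{(2s-2D)\wedge 4}\le 1$, which is exactly condition \eqref{eq:prior approximation condition infty}; part~1 of Proposition~\ref{prop:prior approx implies same pc rate} then yields the asserted contraction in the empirical norm $\|\cdot\|_N$.

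The binary classification claim is entirely parallel, now invoking the $L^2$ estimate $\mathbb{E}\|u_{\h}-u\|_2^2\le C\kappa^{2s-D}h^{(2s-D)\wedge 4}$, which requires only $s>D/2$. With the prescribed choice $h_N\asymp N^{-1/((2s-D)\wedge 4)}$ the right-hand side again reduces to a constant multiple of $N^{-1}$, and fixing the constant in \eqref{eq:scaling of hn rectangle L2} so that $10$ times this bound is at most $N^{-1}$ verifies \eqref{eq:prior approximation condition 2}. Part~2 of Proposition~\ref{prop:prior approx implies same pc rate} then delivers contraction in $\|\cdot\|_2$.

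I do not anticipate a substantive obstacle, since all the analysis already resides in Lemma~\ref{lemma:infty field rate rectangle} (the spectral and field error bounds) and in Proposition~\ref{prop:prior approx implies same pc rate} (the plug-in transfer of contraction rates). The only two points needing a little care are bookkeeping: confirming that the prefactor $C\kappa^{2s-D}$, which depends on the fixed model parameters but not on $N$ or $h$, together with the numerical factor $10$ can be absorbed into the choice of proportionality constant in $h_N$; and confirming that the exponent matchings $h_N^{(2s-2D)\wedge 4}\asymp N^{-1}$ and $h_N^{(2s-D)\wedge 4}\asymp N^{-1}$ are precisely what the scalings \eqref{eq:scaling of hn rectangle L infty} and \eqref{eq:scaling of hn rectangle L2} are engineered to produce.
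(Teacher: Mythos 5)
Your proposal is correct and coincides with the paper's own treatment: the theorem is stated precisely as a corollary obtained by plugging the error bounds of Lemma \ref{lemma:infty field rate rectangle} into the hypotheses \eqref{eq:prior approximation condition infty} and \eqref{eq:prior approximation condition 2} of Proposition \ref{prop:prior approx implies same pc rate}, with the constant in $h_N$ tuned to absorb the prefactor $C\kappa^{2s-D}$ and the factor $10$. The only cosmetic discrepancy is that your bookkeeping correctly shows the multiplicative constant in $h_N$ must be taken \emph{small} (equivalently, the constant in $n_{\h_N}\asymp h_N^{-D}$ large), whereas the theorem's phrasing ``large enough proportion constant'' is most naturally read as referring to the mesh resolution $h_N^{-1}$ or to $n_{\h_N}$.
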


\begin{remark}\label{remark:rate pm}
  
Theorem \ref{thm:rectangle} provides a scaling of $h_{N}$ so that the sequence of posteriors with respect to the FE prior $\Pi_N$ achieves the same contraction rate as if the Mat\'ern-type prior $\Pi$ was used. We remark that a refined analysis of the rate at which the posterior probabilities go to $1$ could be used to obtain similar conclusions for the posterior means under suitable assumptions, i.e. 
\begin{align*}
    \mathbb{E}_{f_0} d_N(\widehat{f},f_0)^2 &\lesssim \varepsilon_N^2, \qquad \widehat{f}=\int f\,d\Pi\big(f|\{(\X_i,Y_i)\}_{i=1}^N\big),\\
    \mathbb{E}_{f_0} d_N(\widehat{f}_N,f_0)^2 &\lesssim \varepsilon_N^2, \qquad \widehat{f}_N=\int f\,d\Pi_N\big(f|\{(\X_i,Y_i)\}_{i=1}^N\big).
\end{align*}
In other words, the sequence of posterior means with respect to $\Pi_N$ converges to $f_0$ at the same rate as those with respect to $\Pi$, thereby giving a more interpretable conclusion. For fixed design regression, this follows from \cite[Theorem 1]{van2011information} and Jensen's inequality with $d_N=\|\cdot\|_{N}$. For binary classification, using again Jensen's inequality and the fact that $|f|\leq 1$ we have  
\begin{align*}
    \|\widehat{f}-f_0\|_2^2 &\leq \int \|f-f_0\|_2^2\, d\Pi\big(f|\{(\X_i,Y_i)\}_{i=1}^N\big) \\
    &\leq M^2\varepsilon_N^2 + 4|\mathcal{D}| \Pi\big(f:\|f-f_0\|_2\geq M\varepsilon_N |\{(\X_i,Y_i)\}_{i=1}^N\big),
\end{align*}
where $|\mathcal{D}|$ is the Lebesgue measure of $\mathcal{D}$. Therefore a rate faster than $\varepsilon_N^2$ on the decay of the posterior probability suffices, which is satisfied under mild assumptions \cite[Theorems 2.2 and 2.3]{ghosal2000convergence}. \qed 
\end{remark}

\begin{remark}\label{remark:proportion constant}
For the regression setting, \eqref{eq:scaling of hn rectangle L infty} together with \eqref{eq:total number of grid points} gives the scaling for the total number of grid points needed,
\begin{align*}
    n_{\h_N} \asymp N^{\frac{D}{(2s-2D)\wedge 4}}.
\end{align*}
In particular when $s>3D/2$, $D=1,2,3$, the exponent for $N$ is less than one and we have $n_{\h_N}\ll N$ asymptotically. For classification, $s> D$ suffices. 
However, we remark that the proportion constant depends implicitly on $\kappa$ and the $L_d$'s as can be seen from \eqref{eq:total number of grid points} and Lemma \ref{lemma:infty field rate rectangle}. In particular, if both $\kappa$ and the $L_d$'s are large, which reflects the case of a rapidly changing field over a large spatial domain, then $N$ may need to be large enough in order for $n_{\h_N}$ to be smaller than $N$. We shall demonstrate through simulation studies in Section \ref{sec:numerics} that for moderate $\kappa$ and $L_d$'s one can achieve $n_{\h}< N$ when $N=O(10^2)$ for a one-dimensional example and $N=O(10^3)$ for a two-dimensional one, thereby suggesting that Theorem \ref{thm:rectangle} has some practical implication. 
\qed
\end{remark}

The scaling of $h_N$ in Theorem \ref{thm:rectangle} ensures that the numerical errors in the FE representations of a true Mat\'ern-type GPs $u$ do not impact the corresponding contraction rates. In the remainder of this section we give an example where the rates $\varepsilon_N$ with respect to the true Mat\'ern-type GP $u$ can be explicitly computed under a smoothness assumption on the truth $f_0$. For this purpose we introduce a notion of regularity of $f_0$ based on the orthonormal basis $\{\Psi_{\i}\}_{\i\in\mathbb{N}^D}$. Let $S$ be an even function in the Schwartz space $\mathcal{S}(\mathbb{R})$ satisfying 
\begin{align*}
    0\leq S \leq 1,\quad S \equiv 1\,\, \operatorname{on}\,\, \Bigl[-\frac12,\frac12\Bigr], \quad \operatorname{supp}(S)\subset [-1,1].
\end{align*}
Define the space 
\begin{align*}
    B_{\infty,\infty}^{\beta}=\left\{f=\sum_{\i\in\mathbb{N}^D}f_{\i}\Psi_{\i}: \|f\|_{B_{\infty,\infty}^{\beta}}=\underset{j\in\mathbb{N}}{\operatorname{sup}}\,2^{\beta j} \|S_j(\sqrt{\Delta})f(\cdot)-f(\cdot)\|_{\infty}<\infty\right\},
\end{align*}
where $S_j(\cdot)=S(2^{-j}\cdot)$ and
\begin{align*}
    S_j(\sqrt{\Delta}) f = \sum_{\i\in\mathbb{N}^D} S_j(\sqrt{\Lambda_{\i}}) f_{\i} \Psi_{\i}.
\end{align*}
\begin{proposition}\label{lemma:besov example infnity rectangle}
Suppose $f_0\in B_{\infty,\infty}^{\beta}$ and set $s=\beta+\frac{D}{2}$ in the definition of $u$. Then for $\varepsilon_N$ a large enough multiple of $N^{-\beta/(2\beta+D)}$, we have $\phi_{f_0}(\varepsilon_N; u, \|\cdot\|_2)\leq N\varepsilon_N^2$ and $\phi_{f_0}(\varepsilon_N; u, \|\cdot\|_{\infty})\leq N\varepsilon_N^2$.
\end{proposition}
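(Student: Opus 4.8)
The plan is to bound the concentration function $\phi_{f_0}(\varepsilon; u, \|\cdot\|_{\mathbb{B}})$ in \eqref{eq:concentration function} by a constant multiple of $\varepsilon^{-D/\beta}$ for both $\mathbb{B}=L^2(\mathcal{D})$ and $\mathbb{B}=L^{\infty}(\mathcal{D})$, and then to solve $\phi_{f_0}(\varepsilon_N)\leq N\varepsilon_N^2$. Since $\varepsilon^{-D/\beta}\leq N\varepsilon^2$ is equivalent to $\varepsilon\geq N^{-\beta/(2\beta+D)}$, a large enough multiple of $N^{-\beta/(2\beta+D)}$ satisfies both inequalities simultaneously. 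The first step is to identify the RKHS: from the representation \eqref{eq:approximate field rectangle} the coefficients of $u$ are independent with variances $\kappa^{2s-D}(\kappa^2+\Lambda_\i)^{-s}$, so its RKHS is $\mathbb{H}=\{g=\sum_\i g_\i\Psi_\i:\|g\|_{\mathbb{H}}^2=\kappa^{-(2s-D)}\sum_\i(\kappa^2+\Lambda_\i)^s g_\i^2<\infty\}$, norm-equivalent to a Sobolev space of order $s$ adapted to the Neumann eigenbasis. As usual the concentration function splits into the decentering term $\operatorname{inf}_{\|g-f_0\|_{\mathbb{B}}<\varepsilon}\|g\|_{\mathbb{H}}^2$ and the small-ball term $-\log\mathbb{P}(\|u\|_{\mathbb{B}}<\varepsilon)$, which I would bound separately.

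For the decentering term I would take the band-limited truncation $g:=S_J(\sqrt{\Delta})f_0$, with the level chosen so that $2^J\asymp\varepsilon^{-1/\beta}$. Since $\operatorname{supp}(S)\subset[-1,1]$, $g$ retains only the frequencies with $\sqrt{\Lambda_\i}\leq 2^J$ and hence lies in $\mathbb{H}$. The definition of $B_{\infty,\infty}^{\beta}$ gives directly $\|g-f_0\|_{\infty}\leq 2^{-\beta J}\|f_0\|_{B_{\infty,\infty}^{\beta}}\lesssim\varepsilon$, and a fortiori $\|g-f_0\|_2\lesssim\varepsilon$ on the bounded domain. To bound $\|g\|_{\mathbb{H}}^2$ I would decompose $g$ into the Littlewood--Paley blocks $g_j:=(S_j(\sqrt{\Delta})-S_{j-1}(\sqrt{\Delta}))f_0$, $j\leq J$, which satisfy $\|g_j\|_{\infty}\lesssim 2^{-\beta j}$ and therefore $\|g_j\|_2\leq|\mathcal{D}|^{1/2}\|g_j\|_{\infty}\lesssim 2^{-\beta j}$. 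On the $j$-th block $(\kappa^2+\Lambda_\i)^s\asymp 2^{2sj}$, so its contribution is $\lesssim 2^{2sj}\|g_j\|_2^2\lesssim 2^{2(s-\beta)j}$. The crucial cancellation is $s-\beta=D/2$, turning this into $2^{Dj}$; summing the geometric series over $j\leq J$ yields $\|g\|_{\mathbb{H}}^2\lesssim 2^{DJ}\asymp\varepsilon^{-D/\beta}$, as required.

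For the small-ball term in $L^2$ the bound is routine: by Weyl's law the ordered covariance eigenvalues decay like $k^{-2s/D}=k^{-1-2\beta/D}$, so a standard small-ball estimate for Gaussian series gives $-\log\mathbb{P}(\|u\|_2<\varepsilon)\lesssim\varepsilon^{-D/\beta}$. The delicate step, which I expect to be the \emph{main obstacle}, is the $L^{\infty}$ small-ball probability. Here I would invoke the small-ball/metric-entropy duality of Kuelbs--Li and Li--Linde: if the unit ball of $\mathbb{H}$ has sup-norm metric entropy $\log N(\eta;\mathbb{H}_1,\|\cdot\|_{\infty})\lesssim\eta^{-\alpha}$ with $\alpha<2$, then $-\log\mathbb{P}(\|u\|_{\infty}<\varepsilon)\lesssim\varepsilon^{-2\alpha/(2-\alpha)}$. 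Because $\mathbb{H}$ is norm-equivalent to $H^s(\mathcal{D})$, the entropy numbers of the embedding $H^s\hookrightarrow L^{\infty}$ on the hyperrectangle give $\alpha=D/s$; with $s=\beta+D/2$ one checks $\alpha=2D/(2\beta+D)<2$ and $2\alpha/(2-\alpha)=D/\beta$, so $-\log\mathbb{P}(\|u\|_{\infty}<\varepsilon)\lesssim\varepsilon^{-D/\beta}$ as well. Care is needed to verify that $u$ is a tight Gaussian element of $(L^{\infty}(\mathcal{D}),\|\cdot\|_{\infty})$, which is guaranteed by the almost-sure Hölder continuity in Lemma \ref{lemma:infty field rate rectangle} whenever $\beta>0$, and to control the entropy through the explicit tensor-product eigenbasis rather than a black-box embedding.

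Combining the two terms, both $\phi_{f_0}(\varepsilon;u,\|\cdot\|_2)$ and $\phi_{f_0}(\varepsilon;u,\|\cdot\|_{\infty})$ are $\lesssim\varepsilon^{-D/\beta}$, and substituting $\varepsilon_N$ equal to a large enough multiple of $N^{-\beta/(2\beta+D)}$ makes each at most $N\varepsilon_N^2$, which is the desired conclusion. The only genuinely nontrivial ingredient is the sup-norm small-ball exponent; everything else reduces to the geometric-series bookkeeping enabled by the exact matching $s-\beta=D/2$ between the prior smoothness and the Besov regularity of the truth.
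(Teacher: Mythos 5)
Your proposal is correct and follows essentially the same route as the paper: the decentering term is handled by the band-limited truncation $S_J(\sqrt{\Delta})f_0$ with dyadic frequency grouping and the cancellation $s-\beta=D/2$, the $L^2$ small-ball term by a direct Gaussian-series estimate (the paper cites Dunker et al.), and the $L^\infty$ small-ball term by the Kuelbs--Li/Li--Linde entropy duality applied to the Sobolev ball containing $\mathbb{H}_1$ (the paper cites Li--Linde and Edmunds--Triebel for exactly the exponent $D/s$ you use). Your Littlewood--Paley block bookkeeping is only a cosmetic variant of the paper's direct tail-coefficient estimate $\sum_{\sqrt{\Lambda_\i}>2^{j-1}}f_\i^2\lesssim\|S_{j}(\sqrt{\Delta})f_0-f_0\|_\infty^2$, so there is nothing substantive to add.
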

The space $B_{\infty,\infty}^{\beta}$ can be seen as a Besov-type space tailored to our specific setting, where the prior support associated with the Mat\'ern-type GP $u$ consists of functions defined as series expansions in terms of the $\Psi_{\i}$'s. Similar function spaces have been considered in \cite{castillo2014thomas}. As the usual Besov spaces, functions in $B_{\infty,\infty}^{\beta}$ should be understood to have regularity of order $\beta$, in which case the contraction rate $\smash{N^{-\beta/(2\beta+D)}}$ matches the usual minimax optimal rate for estimating $\beta$-regular functions.

\section{Simulation Study}\label{sec:numerics}
The aim of this section is to complement the understanding given by Theorem \ref{thm:rectangle} through numerical simulations in the regression setting. We consider one and two-dimensional examples in Subsections \ref{sec:ex1} and \ref{sec:ex2},  respectively.

The general set up is as follows. Let $\{\X_i\}_{i=1}^N$ be fixed design points in the domain $\mathcal{D}$ and $\{Y_i\}_{i=1}^N$ be noisy observations generated from 
\begin{align*}
    Y_i=f_0(\X_i)+\eta_i ,\quad \quad \eta_i\overset{i.i.d.}{\sim}\mathcal{N}(0,\tau^2),
\end{align*}
where $f_0$ is the ground truth and $\tau$ is known. We compare two approaches for inferring $f_0$, namely the covariance function (CF) approach and the finite element (FE) approach with mass lumping. They can be summarized as follows:
\begin{align*}
    \y\sim \mathcal{N}(\mathbf{f_N},\tau^2 I_N), \quad \mathbf{f_N} \sim \mathcal{N}(\0,\mathbf{\Sigma}) \quad \Longrightarrow \quad \widehat{\mathbf{f}}_{\mbox {\tiny{\rm CF}}}&= \mathbf{\Sigma}(\mathbf{\Sigma}+\tau^2\mathbf{I_N})^{-1}\y,
\end{align*}
where $\mathbf{\Sigma}=\{\cmat(\X_i,\X_j)\};$ and 
\begin{align*}
    \y\sim \mathcal{N}(\S\w,\tau^2I_N),\quad 
    \w \sim \mathcal{N}(\0,\Q^{-1})\quad \Longrightarrow\quad  \widehat{\mathbf{f}}_{\mbox {\tiny{\rm FE}}}= \S (\S^\top \S+\tau^2\Q)^{-1}\S^\top \y,
\end{align*}
where $\S_{ij}=e_j(X_i)$ is as in Subsection \ref{ssec: FEM for regression and classification} and $\Q=(\kappa^2\M+\G)\big[\widetilde{\M}^{-1}(\kappa^2 \M+\G)\big]^{s-1}$ as in Proposition \ref{prop:covariance of FEM approximation} but with the lumped mass matrix $\smash{\widetilde{\M}}$ instead. As noted in Remark \ref{remark:rate pm}, we shall compare the error $\|\widehat{\mathbf{f}}_{\mbox {\tiny{\rm CF}}}-\mathbf{f}_0\|_N$ and $\|\widehat{\mathbf{f}}_{\mbox {\tiny{\rm FE}}}-\mathbf{f}_0\|_N$ when an increasing number of grid points ($n_h$) is used in the FE approach, where $\mathbf{f}_0=(f_0(\X_1),\ldots,f_0(\X_N))^\top$ and $\|\cdot\|_N$ is the vector 2-norm normalized by $1/\sqrt{N}$. Note that the CF and FE approaches studied here are not exactly those analyzed in Theorem \ref{thm:rectangle}, i.e., the error of going from the CF approach to the Mat\'ern-type prior (expected to be small by Proposition \ref{prop:covariances are close}) and that of the lumped mass procedure were not accounted for. 
However, we remark that both errors do not lead to a significant difference in the numerical results and we will only focus on the CF and FE approaches, which are used in practice.

\begin{figure}[!htb]
\minipage{1\textwidth}
\centering
\minipage{0.333\textwidth}
  \includegraphics[width=\linewidth]{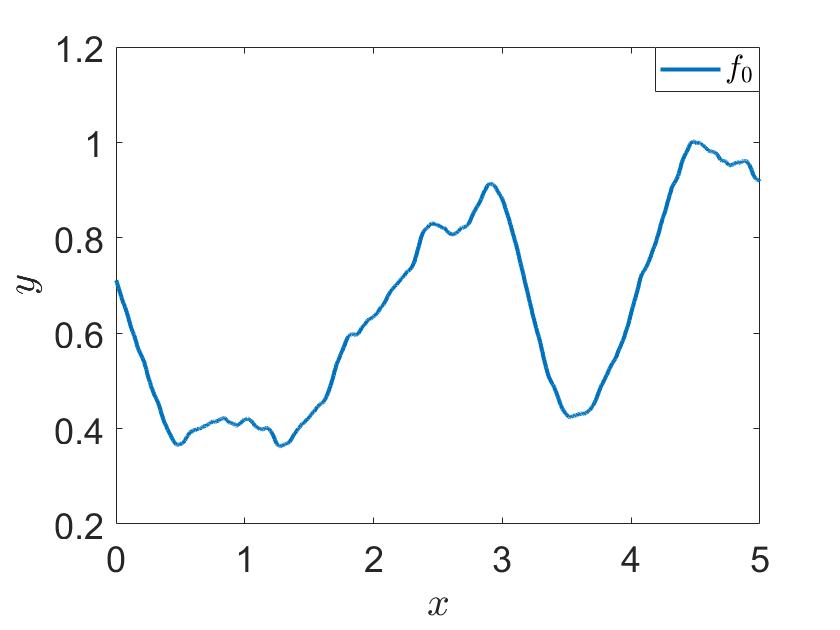}
\vspace{-10pt}
\endminipage
\minipage{0.333\textwidth}
  \includegraphics[width=\linewidth]{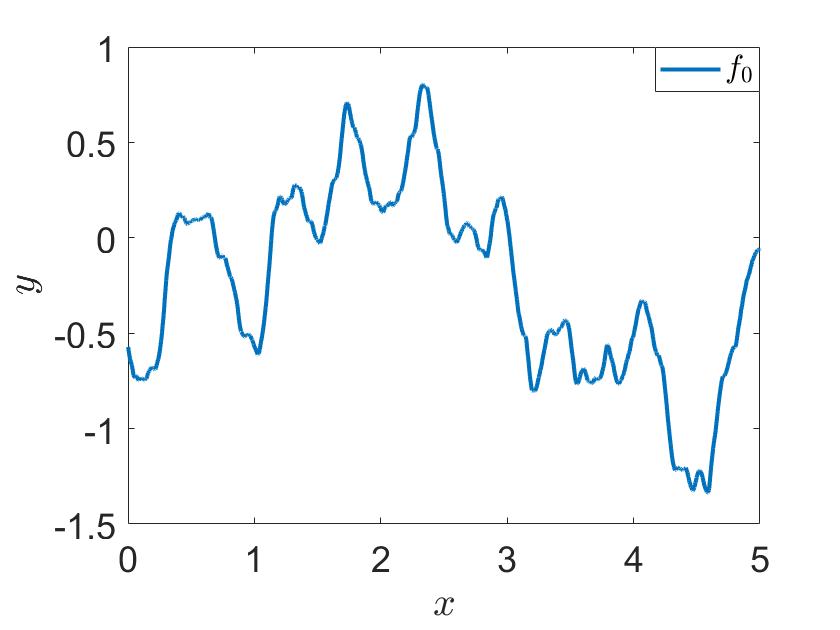}
\vspace{-10pt}
\endminipage 
\minipage{0.333\textwidth}
  \includegraphics[width=\linewidth]{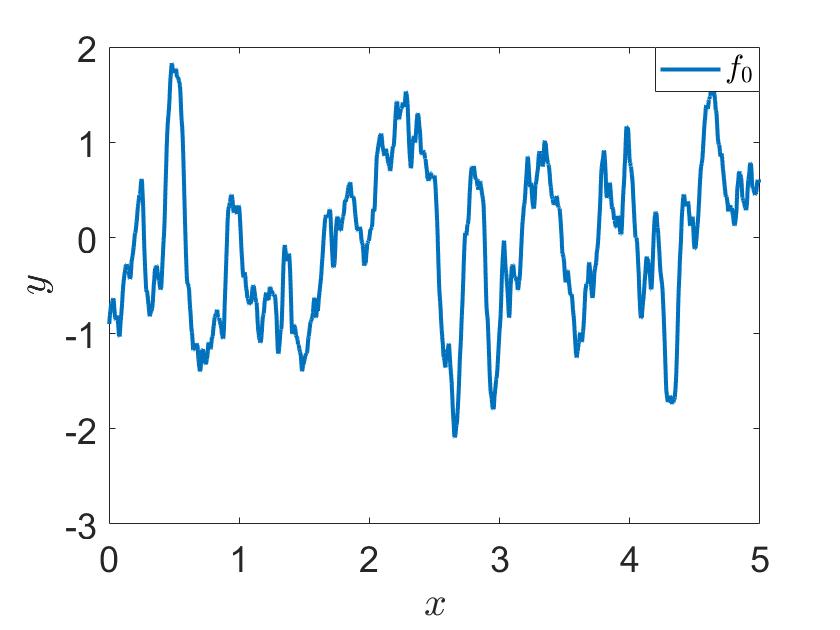}
\vspace{-10pt}
\endminipage 
\endminipage\hfill

\minipage{1\textwidth}
\centering
\minipage{0.333\textwidth}
  \includegraphics[width=\linewidth]{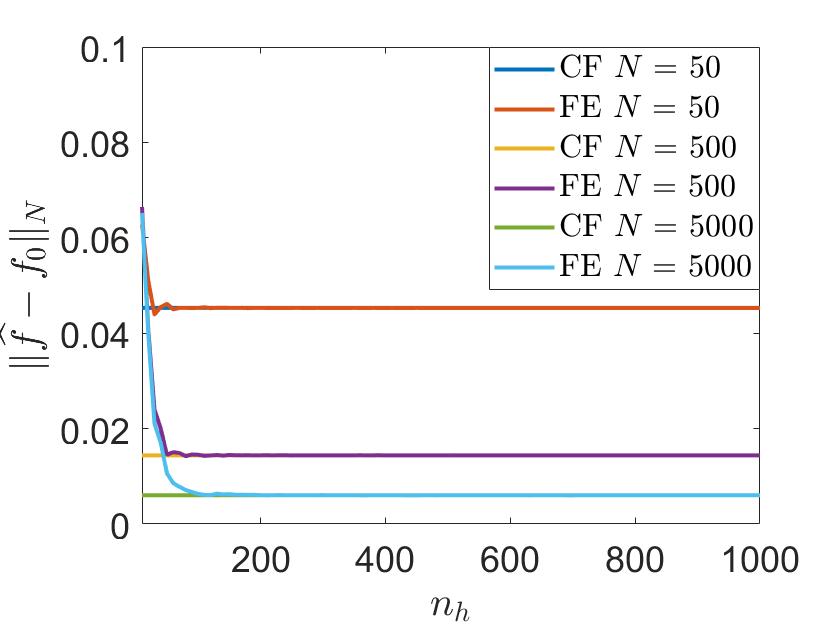}
\vspace{-10pt}\subcaption{$s_0=2$, $\kappa=1$}
\endminipage
\minipage{0.333\textwidth}
  \includegraphics[width=\linewidth]{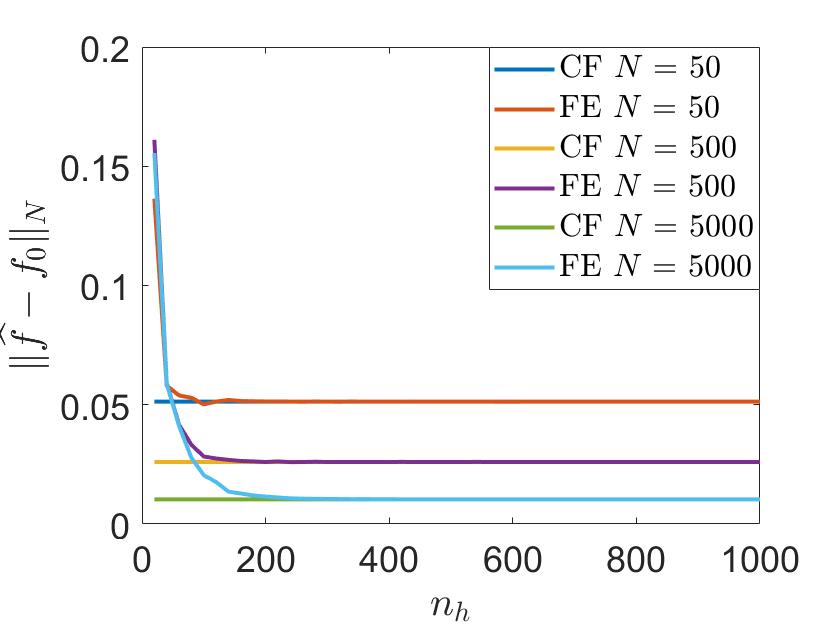}
\vspace{-10pt}\subcaption{$s_0=2$, $\kappa=5$}
\endminipage 
\minipage{0.333\textwidth}
  \includegraphics[width=\linewidth]{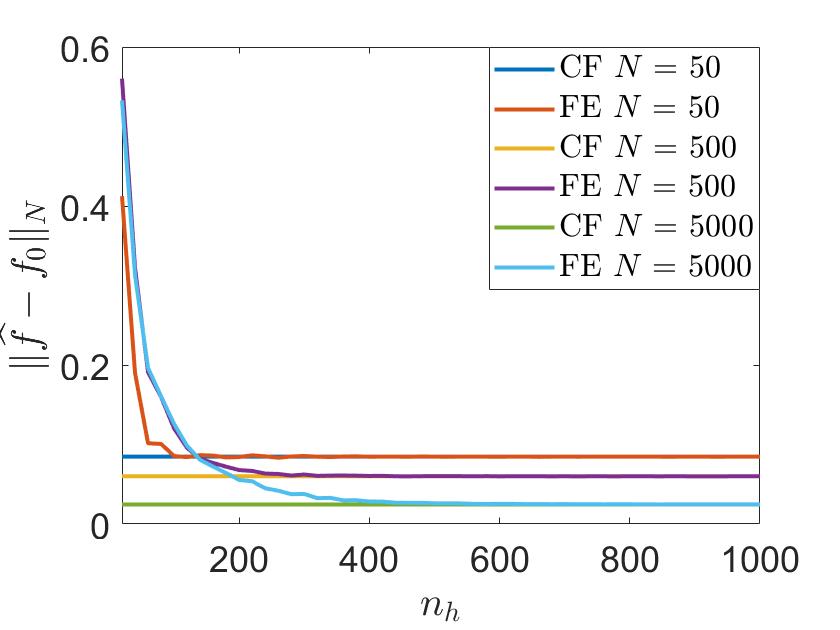}
\vspace{-10pt}\subcaption{$s_0=2$, $\kappa=25$}
\endminipage 
\endminipage\hfill
\caption{The three columns represent simulations for $\kappa_0=1$, 5 and 25 respectively with $s_0=2$ in all cases. The upper row shows plots of $f_0$. The lower row compares the estimation error $\|\widehat{\mathbf{f}}-\mathbf{f}_0\|_N$ between the covariance function (CF) approach and the finite element (FE) approach as $n_h$ increases, for three levels of data $N=50$, 500 and 5000. } \label{figure:ex1}
\end{figure}

\subsection{The One-dimensional Case}\label{sec:ex1}
To start with, let $\{\X_i\}_{i=1}^N$ be fixed design points generated from the uniform distribution over $[0,L]$. 
We shall consider $f_0$'s generated from the following series expansion (with a sufficiently high truncation)
\begin{align}
    f_0(x)\sim \frac{\kappa_0^{-1/2}\xi_0}{\sqrt{L}}+\frac{\sqrt{2}\kappa_0^{s_0-1/2}}{\sqrt{L}}\sum_{i=1}^{\infty}\left[\kappa_0^2+\left(\frac{i\pi }{L}\right)^2\right]^{-s_0/2}\left[\xi_i\cos\left(\frac{i\pi x}{L}\right)+\zeta_i\sin\left(\frac{i\pi x}{L}\right)\right], \label{eq:truth distribution}
\end{align}
where $\xi_i,\zeta_i\smash{\overset{i.i.d.}{\sim}}\mathcal{N}(0,1)$. Notice that \eqref{eq:truth distribution} is defined in the same spirit as \eqref{eq:KL expansion true field} except that the full trigonometric basis is used, so that the random field \eqref{eq:truth distribution} does not have a prescribed boundary condition. Our motivation to not consider here a Neumann boundary condition is to illustrate that similar conclusions as those suggested by our theory can be expected in more general settings.
Notice again that there are two parameters $s_0$ and $\kappa_0$, which control the smoothness and correlation lengthscale respectively. We will vary both $s_0$ and $\kappa_0$ in the following simulations. 

For both the CF and FE approaches, we use the same parameters $s_0$ and $\kappa_0$ that are used to generate $f_0$. In other words, we consider the Mat\'ern covariance \eqref{eq:matern covariance} with parameters $\nu=s_0-1/2$, $\kappa=\kappa_0$ and $\sigma^2$ given in \eqref{eq:varaince formula}, and FE approximation \eqref{eq:KL fem approx} with $s=s_0$ and $\kappa=\kappa_0$. For the FE approach, we construct the approximation over the larger interval $[-\rho, L+\rho]$ where $\rho=\sqrt{8\nu}/\kappa$ to reduce the boundary effects suggested in Proposition \ref{prop:covariances are close}. Three levels of data $N=50$, 500 and 5000 are considered and, for each $N$, we study the performance for the FE approach as the number $n_h$ of grid points increases. Finally we let $L=5$ and $\tau=0.1\cdot \|\mathbf{f}_0\|_2/\sqrt{N}$, which amounts to about 10\% error. 

Figure \ref{figure:ex1} shows the results when we fix the smoothness $s_0=2$ and vary $\kappa_0=1$, 5 and 25. We see that the estimation error for the FE approach decreases to that of the CF approach after certain threshold $n_h^{\ast}$. In other words, discretization at the level of $n_h^{\ast}$ for the FE approach is sufficient to yield the same estimation performance as the CF approach. The value of $n_h^{\ast}$ is seen to be smaller than the sample size when $N=500$ and is of an order of magnitude smaller when $N=5000$, in the same spirit as the scaling suggested in Theorem \ref{thm:rectangle}. The fact that $n_h^{\ast}$ is larger than the sample size when $N=50$ can be explained by the large proportion constant in Remark \ref{remark:proportion constant}. Furthermore such proportion constant increases with $\kappa$, as suggested by the larger $n_h^{\ast}$ for a larger $\kappa$.

To further understand the effect of the smoothness $s_0$, we perform two more simulations for (a) $s_0=1$, $\kappa_0=1$ and (b) $s_0=3$, $\kappa_0=25$. For (a) we see in Figure \ref{figure:ex1 2 1} that the $n_h^{\ast}$'s in this case are much larger than the $s_0=2$ cases. This is due to the roughness of the truth and the prior used and hence a large number of grid points are needed for accurate approximation even if $\kappa$ is small. On the other hand when $s_0=3$, Figure \ref{figure:ex1 2 2} shows qualitatively similar results as in Figure \ref{figure:ex1} in the sense that $n_h^{\ast}$ is asymptotically much smaller than $N$. Moreover the $n_h^{\ast}$'s are seen to be smaller than those when $s_0=2$, $\kappa_0=25$, as the underlying field is smoother and the required scaling suggested by Theorem \ref{thm:rectangle} is smaller. 

\begin{figure}[!htb]
\minipage{1\textwidth}
\centering
\minipage{0.333\textwidth}
  \includegraphics[width=\linewidth]{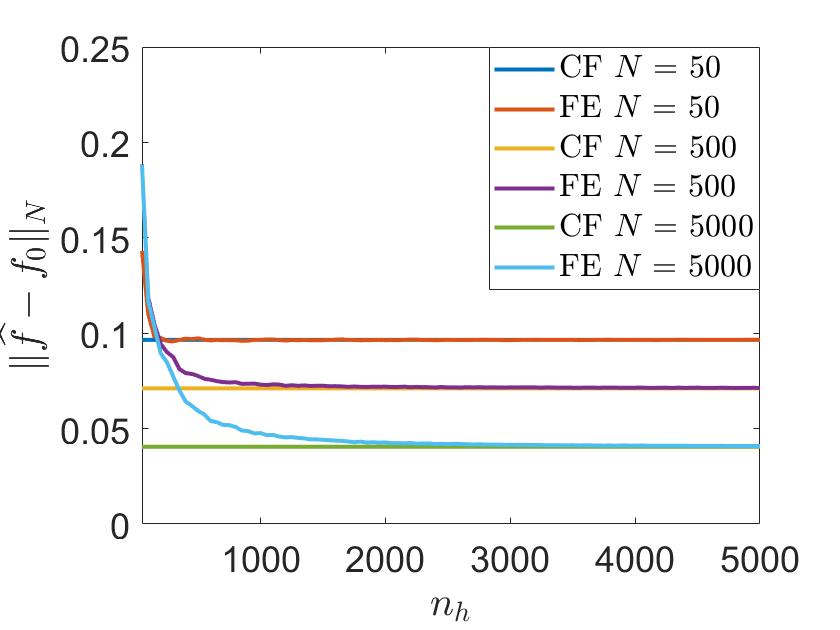}
\vspace{-10pt}\subcaption{$s_0=1$, $\kappa=1$.}\label{figure:ex1 2 1}
\endminipage 
\minipage{0.333\textwidth}
  \includegraphics[width=\linewidth]{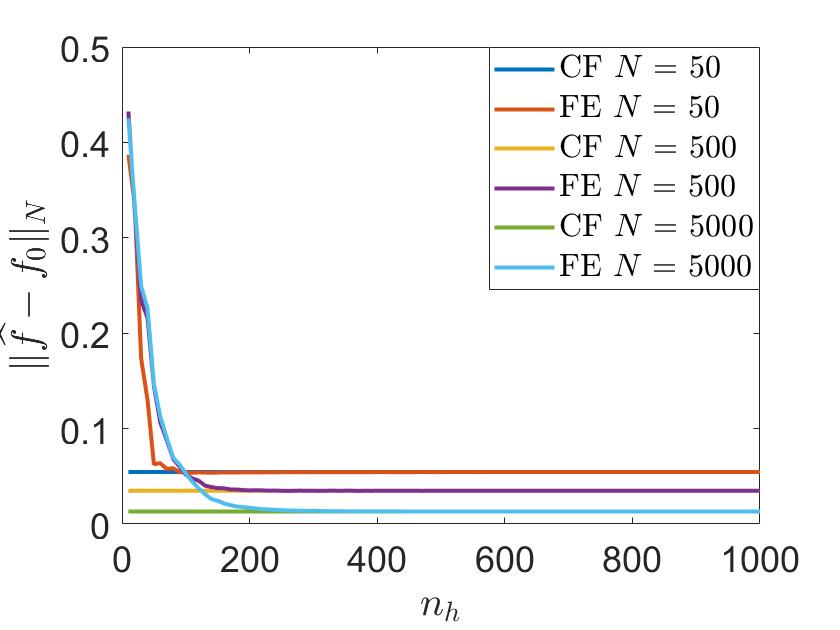}
\vspace{-10pt}\subcaption{$s_0=3$, $\kappa=25$.}\label{figure:ex1 2 2}
\endminipage 
\endminipage\hfill
\caption{Comparison of estimation error $\|\widehat{\mathbf{f}}-\mathbf{f}_0\|_N$ between the covariance function (CF) approach and the finite element (FE) approach on three data levels $N=50$, 500, 5000 for (a) $s_0=1$, $\kappa=1$ and (b) $s_0=3$, $\kappa=25$.}  \label{figure:ex1 2}
\end{figure}

\begin{figure}[!htb]
\minipage{1\textwidth}
\centering
\minipage{0.333\textwidth}
  \includegraphics[width=\linewidth]{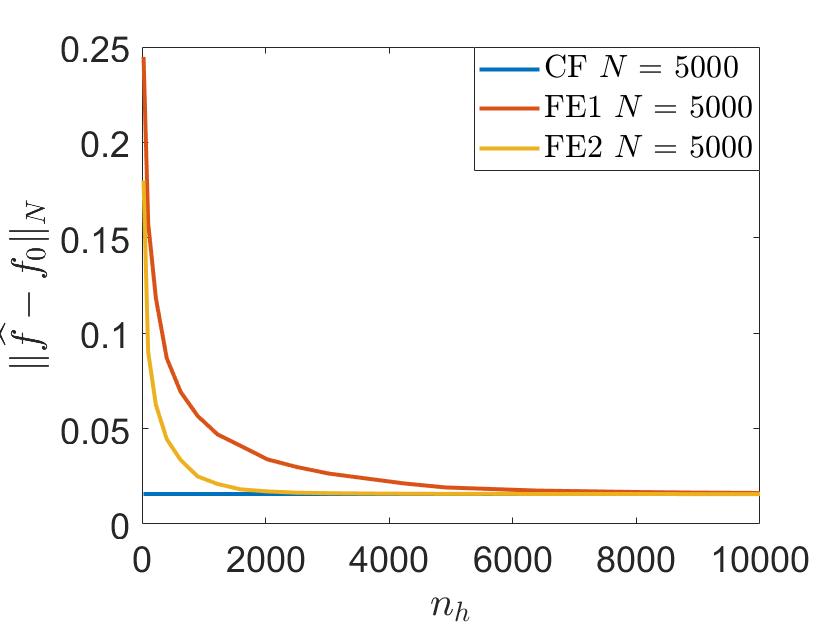}
\vspace{-10pt}\subcaption{$s_0=2$, $\kappa_0=1$.}
\endminipage
\minipage{0.333\textwidth}
  \includegraphics[width=\linewidth]{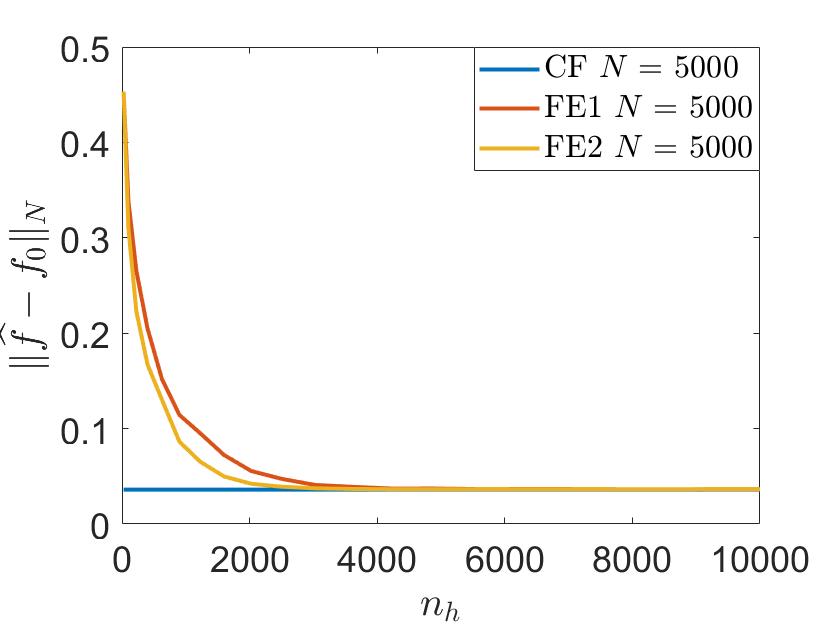}
\vspace{-10pt}\subcaption{$s_0=2$, $\kappa_0=5$.}
\endminipage 
\minipage{0.333\textwidth}
  \includegraphics[width=\linewidth]{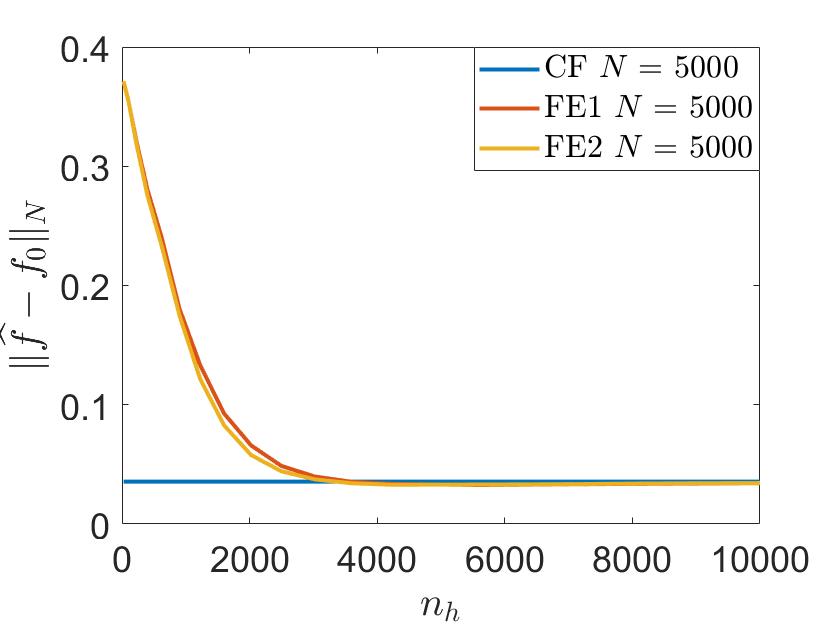}
\vspace{-10pt}\subcaption{$s_0=2$, $\kappa_0=25$.}
\endminipage 
\endminipage\hfill

\minipage{1\textwidth}
\centering
\minipage{0.333\textwidth}
  \includegraphics[width=\linewidth]{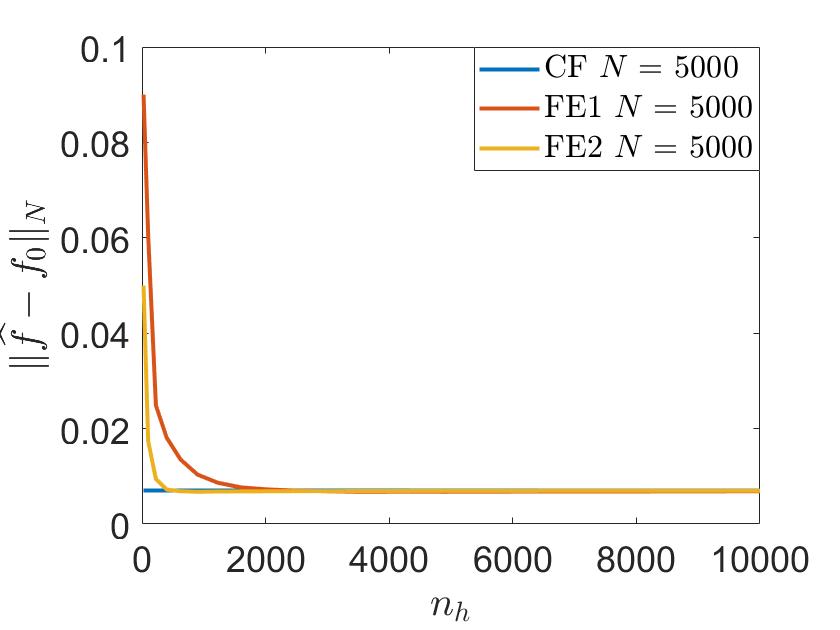}
\vspace{-10pt}\subcaption{$s_0=3$, $\kappa_0=1$.}
\endminipage
\minipage{0.333\textwidth}
  \includegraphics[width=\linewidth]{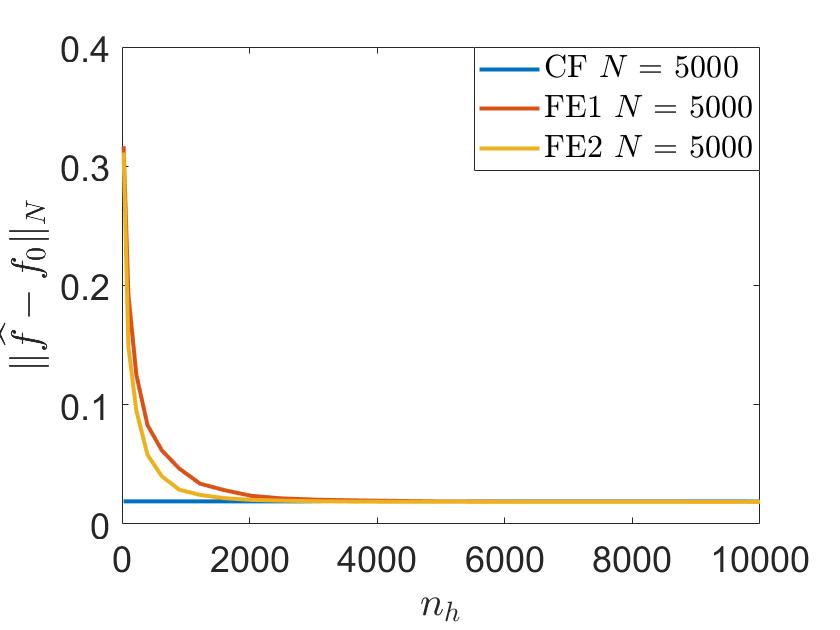}
\vspace{-10pt}\subcaption{$s_0=3$, $\kappa_0=5$.}
\endminipage 
\minipage{0.333\textwidth}
  \includegraphics[width=\linewidth]{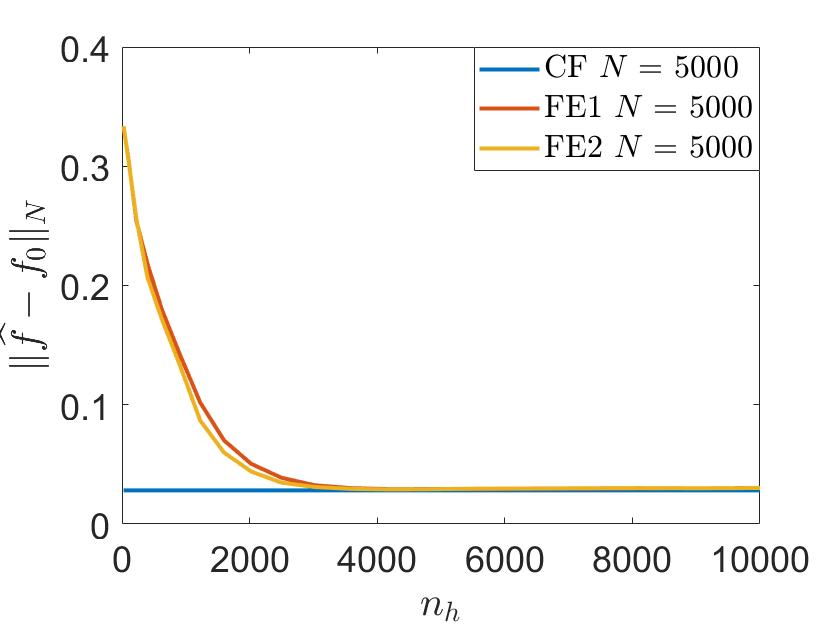}
\vspace{-10pt}\subcaption{$s_0=3$, $\kappa_0=25$.}
\endminipage 
\endminipage\hfill

\minipage{1\textwidth}
\centering
\minipage{0.333\textwidth}
  \includegraphics[width=\linewidth]{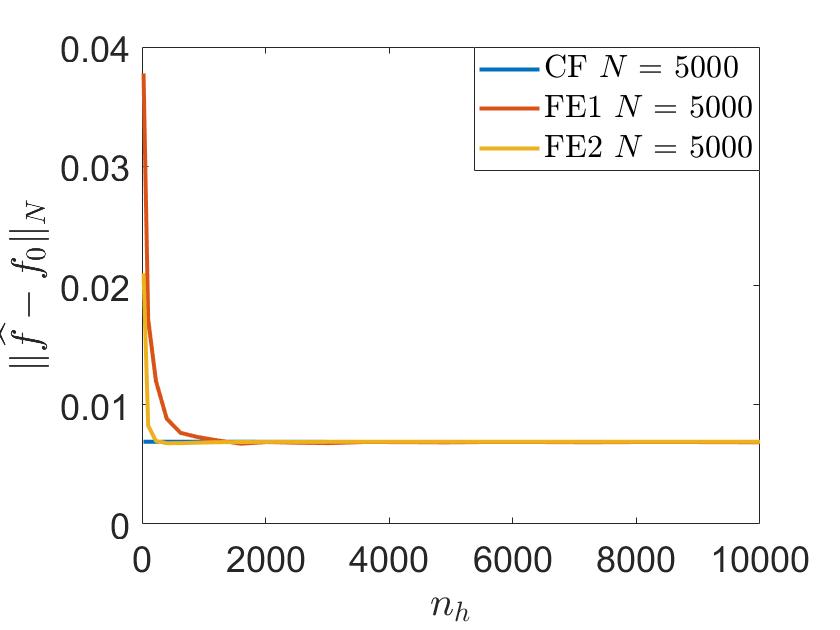}
\vspace{-10pt}\subcaption{$s_0=4$, $\kappa_0=1$.}
\endminipage
\minipage{0.333\textwidth}
  \includegraphics[width=\linewidth]{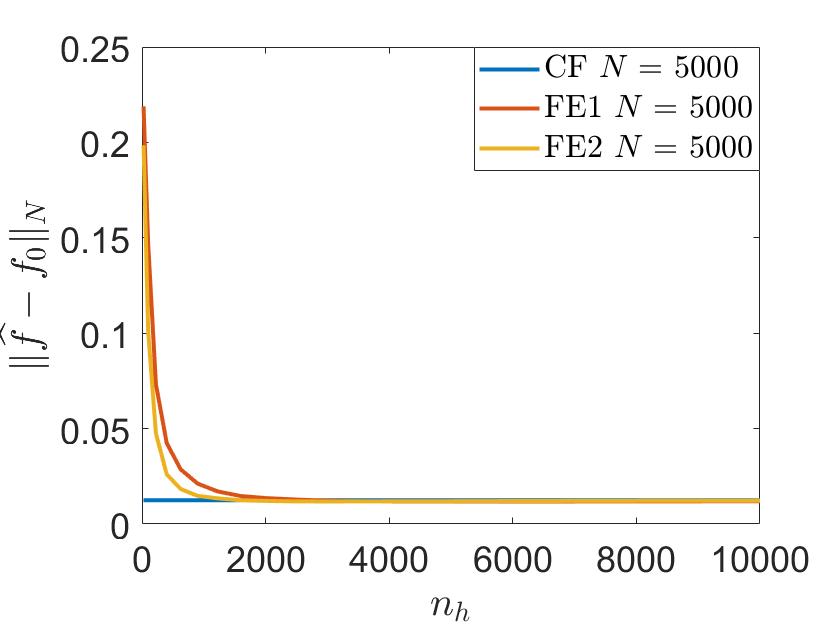}
\vspace{-10pt}\subcaption{$s_0=4$, $\kappa_0=5$.}
\endminipage 
\minipage{0.333\textwidth}
  \includegraphics[width=\linewidth]{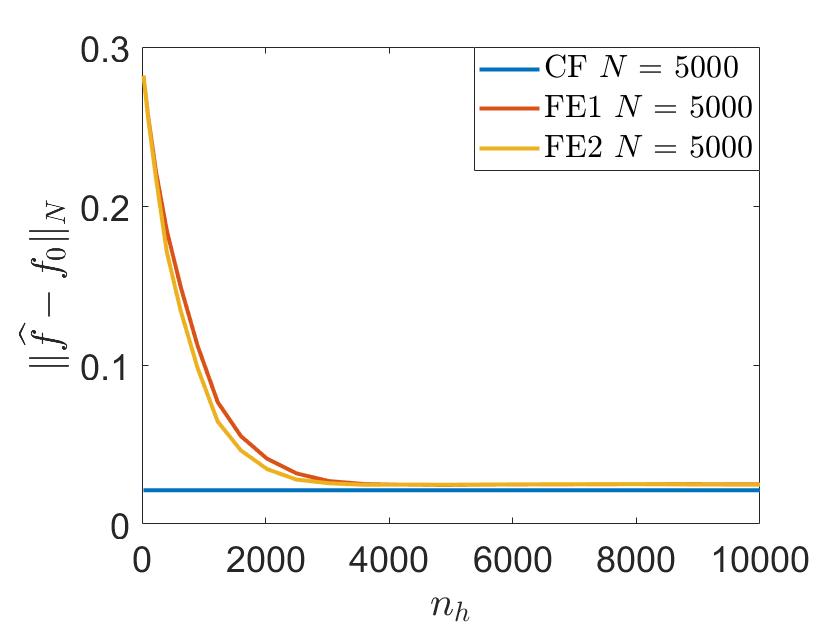}
\vspace{-10pt}\subcaption{$s_0=4$, $\kappa_0=25$.}
\endminipage 
\endminipage\hfill
\caption{Comparison of the estimation error $\|\widehat{\mathbf{f}}-\mathbf{f}_0\|_N$ as $n_h$ increases between the covariance function (CF) approach and two finite element approaches where FE1 is computed over $[-\rho, L+\rho]^2$ and FE2 is computed over $[-0.1\rho, L+0.1\rho]^2$. Simulation results for different combinations of $s_0$ and $\kappa_0$ are shown. } \label{figure:ex2}
\end{figure}

\subsection{The Two-dimensional Case}\label{sec:ex2}
Now we move on to the more practically relevant two-dimensional case following a similar set up as above. Let $\{\X_i\}_{i=1}^N$ be fixed design points generated from the uniform distribution over the square $[0,L]^2$ and $f_0$ be generated similarly as \eqref{eq:KL expansion true field} with $\psi_i$'s the full trigonometric basis (i.e. elements of the form $\sin(10\pi x_1/L)\sin(5\pi x_2/L)$, $\cos(3\pi x_1/L)\sin(9\pi x_2/L)$, etc.) so that there is no prescribed boundary condition for $f_0$. We shall again compare the CF and FE approaches when $f_0$ is generated with different values of $s_0$ and $\kappa_0$. 

The exact procedure for the CF and FE approaches will be completely analogous to the 1D case. In particular, the same parameters $s_0$ and $\kappa_0$ that generate $f_0$ are used and furthermore the FE approach is carried out over the larger domain $[-\rho, L+\rho]^2$ with $\rho=\sqrt{8\nu}/\kappa$ to reduce boundary effects. However we remark that in the 2D case extending the domain has a larger impact on the performance of the FE approach than the 1D case. The reason is that to achieve the same mesh size within the domain $[0,L]^2$, the FE approach over $[-\rho,L+\rho]^2$ will require many more grid points than over $[0,L]^2$. In particular if a uniform partition of mesh size $h$ as in Subsection \ref{sec:fem hyper rectangle} is adopted, then the overall increment of number of grid points is 
\begin{align}
    \Big(\frac{L+2\rho}{h}\Big)^2-\Big(\frac{L}{h}\Big)^2=\Big(\frac{2\rho}{h}\Big)\Big(\frac{2L}{h}\Big) + \Big(\frac{\rho}{h}\Big)^2. \label{eq:increment grid}
\end{align}
The factor $2L/h$ makes \eqref{eq:increment grid} much larger than the increment $2\rho/h$ in each dimension and leads to a much larger saturation threshold $n_h^{\ast}$ (that we have introduced in Subsection \ref{sec:ex1}). For this reason we consider an alternate FE approach carried out over the smaller domain $[-0.1\rho,L+0.1\rho]^2$ and compare its performance with the other FE approach over $[-\rho,L+\rho]^2$. For the simulations that we are going to present, we fix $N=5000$, $L=5$, $\tau=0.1\cdot \|\mathbf{f}_0\|_2/\sqrt{N}$ and vary $s_0\in \{2,3,4\}$, $\kappa_0\in\{1,5,25\}$, where we recall $\mathbf{f}_0=(f_0(\X_1),\ldots,f_0(\X_N))^\top$. Similar parameter settings were considered in \cite{bolin2013comparison}.

Figure \ref{figure:ex2} shows qualitatively similar results as those in Subsection \ref{sec:ex1}, where the estimation error of both FE approaches decreases to that of the CF approach after certain threshold $n_h^{\ast}$. Although Theorem \ref{thm:rectangle} suggests a smaller asymptotic scaling for $n_h^{\ast}$ only when $s>3$, the simulation results suggest that this is true for $s=3$ and even for $s=2$ when the FE approach is computed over $[-0.1\rho,L+0.1\rho]^2$. Furthermore, no estimation accuracy is lost when this smaller domain is used and a smaller $n_h^{\ast}$ suffices so that it is more favorable, especially when $\kappa$ is small or equivalently when $\rho$ is large. Finally we remark that $s_0=2, \kappa_0=25$ corresponds to a very rapidly changing field and even in this case we have $n_h^{\ast}<N$ when $N=5000$, which is also a realistic amount of data relative to the domain size. Following the same intuition as provided in Figure \ref{figure:ex1}, it is reasonable to expect that one can take $n_h$  an order of magnitude smaller than $N$ when e.g. $N=50000$. Therefore we believe the results in Theorem \ref{thm:rectangle} have practical implications for a wide range of moderate nonasymptotic regimes and can provide some meaningful insights for real world applications.

\section{Discussion and Open Directions}\label{sec:conclusion}

In this paper we have employed a Bayesian nonparametrics framework to provide new understanding on the choice of the dimension $n_h$ in FE approaches to GP regression and classification. Our theory and simulation studies  demonstrate that under mild smoothness assumptions one can take $n_h\ll N$ for a wide range of practical scenarios without hindering the estimation accuracy, leading to a second layer of computational gain on top of the well-celebrated sparsity provided by the FE approach. 

One of the key elements in our analysis is the framework \cite{van2008rates} which allows to translate prior approximation guarantees to the posteriors.   
In the context of GP regression and classification, this boils down to controlling, respectively, the error $\mathbb{E}\|u_{h_N}-u\|_{\infty}^2$ and $\mathbb{E}\|u_{h_N}-u\|^2_2$ as in Proposition \ref{prop:prior approx implies same pc rate}. In Sections \ref{sec:main} and \ref{sec:numerics} we have used this framework to analyze the Mat\'ern covariance approach and linear FE approximations thereof on a hyperrectangle. We remark that the applicability of our framework goes beyond this simple setting with the following possible extensions.  

\paragraph{General Elliptic Operators}
One can define a nonstationary Mat\'ern-type GP similarly as in \eqref{eq:spde global} by replacing the operator $\kappa^2-\Delta$ with a more general elliptic operator $\kappa^2-\nabla \cdot (\bf{H}\nabla)$ on a general domain, where $\kappa$ and $\bf{H}$ are smooth functions taking values in real numbers and matrices, respectively. The error analysis in \cite{bolin2020numerical,cox2020regularity} on FE approximations of these random fields together with Proposition \ref{prop:prior approx implies same pc rate} would give a sufficient scaling of the mesh size $h_N$. We remark that in this case there would not be an easily computable covariance function approach to compare with, but one can still arrive at the conclusion that there is no need to discretize beyond the threshold implied by  Proposition \ref{prop:prior approx implies same pc rate}.

\paragraph{Higher Order FEM}
Higher order finite elements may be employed when the smoothness parameter $s$ is large. In particular, results from \cite{bolin2020numerical,cox2020regularity} show that the $L^2$ approximation rate in Lemma \ref{lemma:infty field rate rectangle} can be improved to $h^{(2s-D)\wedge (2p+2)}$ when polynomials of order $p$ are used. As a result, the scaling for $h_N$ in Theorem \ref{thm:rectangle} can be improved accordingly. The $L^{\infty}$ approximation rate for $p>1$ remains an interesting open question.

\paragraph{Rational Approximation}
When the smoothness parameter $s$ is not an integer, the favorable sparsity of the FE approximation is lost. For this reason, \cite{bolin2020rational} proposed a rational approximation of the fractional operator which retains sparsity. The resulting approximate field is shown to satisfy a similar $L^2$ approximation error bound as in Lemma \ref{lemma:infty field rate rectangle}, with an additional term coming from the rational approximation which can be made as small as desired. Following a similar argument as in \cite{bolin2020rational}, our proof for the $L^{\infty}$ bound can also be extended to the rational approximate field. These prior approximation rates can again be combined with Proposition \ref{prop:prior approx implies same pc rate} to yield a sufficient scaling of the mesh size.

\paragraph{Learning the Lengthscale}
A novel aspect of our error bounds for FE prior representations is that we keep track of the (inverse) lengthscale parameter $\kappa.$  Our theory and numerical experiments  help explain the need of finer discretization when the lengthscale is shorter. An interesting direction for further research is the design of algorithms for the simultaneous learning of (i) adaptive FE meshes for GP representations; and (ii)  spatially-variable lengthscale parameters $\kappa(x)$ in nonstationary Mat\'ern-type models. 

\paragraph{Beyond Regression and Classification} Lastly, we also envision that the framework we have introduced can be adopted in other problems such as density estimation \cite{ghosal2000convergence} and nonlinear Bayesian inverse problems \cite{stuart2010inverse}. The results from \cite{van2008rates} readily extend our framework to density estimation problems and it is an interesting direction to extend Proposition \ref{prop:prior approx implies same pc rate} to Bayesian inverse problem settings building, for instance, on \cite{nickl2020convergence,giordano2020consistency}.

\section{Proof of Main Results}\label{sec:proofs}

\begin{proof}[Proof of Proposition \ref{prop:covariance of FEM approximation}]
It suffices to find the coordinates of \eqref{eq:KL fem approx} in terms of the finite element basis $e_{h,i}$'s. Taking inner product of \eqref{eq:KL fem approx} with $e_{h,j}$, we get 
\begin{align*}
    \sum_{i=1}^{n_h}(\kappa^2 +\lambda_{h,i})^{-s/2}\xi_i\langle \psi_{h,i},e_{h,j} \rangle = \sum_{i=1}^{n_h}w_i\langle e_{h,i},e_{h,j}\rangle  \quad \quad j=1,\ldots,n_h,
\end{align*}
and the system 
\begin{align}
    \R(\kappa^2 \I_{n_h}+\bLambda)^{-s/2} \boldsymbol{\xi} = \M\w, \label{eq:covariance of FEM approximation system}
\end{align}
where   $\R_{ij}=\langle e_{h,i},\psi_{h,j}\rangle$   and  $\bLambda$ is the diagonal matrix with entries $\bLambda_{ii}=\lambda_{h,i}$. It now remains to relate $\bLambda$ with the matrices $\R, \M, \G$. Since the $\psi_{h,i}$'s form an orthonormal basis, we have $e_{h,i}=\sum_{j=1}^{n_h}\langle e_{h,i},\psi_{h,j}\rangle \psi_{h,j}$ $=\sum_{j=1}^{n_h} \R_{ij} \psi_{h,j}$, which implies that $\psi_{h,i}=\sum_{j=1}^{n_h} ( \R^{-1})_{ij} e_{h,j}$. The fact that $\psi_{h,i}$'s are (variational) eigenvectors of $-\Delta_h$ with corresponding eigenvalues $\lambda_{h,i}$ gives $\G\R^{-\top}= \M\R^{-\top} \bLambda$, which together with the fact that $\R \R^\top= \M$ further implies $\bLambda= \R^{-1}\G \R^{-\top}$. The result then follows by plugging such representation for $\bLambda$ into \eqref{eq:covariance of FEM approximation system}.
\end{proof}

\begin{proof}[Proof of Lemma \ref{lemma:spectral bound 1d}]
First note that we have $\lambda_0=0$, $\psi_0\equiv 1/L$ and
\begin{align*}
    \lambda_i=\left(\frac{i\pi}{L}\right)^2, \quad \psi_i=\frac{2}{L}\cos\left(\frac{i\pi x}{L}\right),\quad
    i=1,2,\ldots 
\end{align*}
We then have 
\begin{align*}
    |\lambda_{h,i}-\lambda_i|&\leq \frac{1}{2+\cos(i\pi h/L)}\left|\frac{6}{h^2}\left(1-\cos\frac{i\pi h}{L}\right)-\left(\frac{i\pi}{L}\right)^2\left(2+\cos \frac{i\pi h}{L}\right)\right|\\
    &\leq \left|\frac{6}{h^2}\left(1-\cos\frac{i\pi h}{L}\right)-\left(\frac{i\pi}{L}\right)^2\left(2+\cos \frac{i\pi h}{L}\right)\right|. 
\end{align*}
Expanding the last expression based on the Taylor series of $\cos x$ we obtain 
\begin{align*}
    |\lambda_{h,i}-\lambda_i| \leq \left|\frac{3}{h^2}\left(\frac{i\pi h}{L}\right)^2+\frac{1}{h^2}O\left(\frac{i\pi h}{L}\right)^4-3\left(\frac{i\pi}{L}\right)^2+\left(\frac{i\pi}{L}\right)^2O\left(\frac{i\pi h}{L}\right)^2\right|
    &\leq C\lambda_i^2 h^2.
\end{align*}
For the approximation error of the eigenfunctions, we first compute the normalizing constants $c_i$'s. For $i=0$ we notice that $\psi_{h,0}$ is constant and hence $c_0=1/L$. For general $i$'s, we denote $\psi_{h,i}=\sum_{k=0}^K z_{h,i,k}e_{h,k}$ and compute 
\begin{align*}
    \langle\psi_{h,i},\psi_{h,j}\rangle&=\sum_{k=0}^{K}\sum_{\ell=0}^{K} z_{h,i,k}z_{h,j,\ell}\langle e_{h,k},e_{h,\ell}\rangle\\
    &=z_{h,i,0}\sum_{\ell=0}^{K}z_{h,j,\ell}\langle e_{h,0},e_{h,\ell}\rangle+z_{h,i,K}\sum_{\ell=0}^{K}z_{h,j,\ell}\langle e_{h,K},e_{h,\ell}\rangle
    +\sum_{k=1}^{K-1} z_{h,i,k}\sum_{\ell=0}^{K}z_{h,j,\ell}\langle e_{h,k},e_{h,\ell}\rangle\\
    &=z_{h,i,0}\left(\frac{z_{h,j,0}}{3}+\frac{z_{h,j,1}}{6}\right)h+z_{h,i,K}\left(\frac{z_{h,j,K}}{3}+\frac{z_{h,j,K-1}}{6}\right)h \\
    &\quad +\sum_{k=1}^{K-1} z_{h,i,k} \left(\frac{z_{h,j,k-1}}{6}+\frac{2z_{h,j,k}}{3}+\frac{z_{h,j,k+1}}{6}\right)h\\
    &=c_ic_jh\left(\frac{1}{6}\cos \frac{j\pi h}{L}+\frac13\right)\left[1+(-1)^{i+j}+2\sum_{k=1}^{K-1} \cos\left(\frac{ki\pi h}{L}\right)\cos\left(\frac{kj \pi h}{L}\right)\right],
\end{align*}
where we have used  that $\cos(a-t)+\cos(a+t)=2\cos(a)\cos(t)$. Using further the fact that $2\cos(a)\cos(b)=\cos(a+b)+\cos(a-b)$, we have 
\begin{align*}
    2\sum_{k=1}^{K-1} \cos\left(\frac{ki\pi h}{L}\right)\cos\left(\frac{kj \pi h}{L}\right)=\sum_{k=1}^{K-1} \cos\left(\frac{k(i+j)\pi h}{L}\right) +\cos\left(\frac{k(i-j)\pi h}{L}\right). 
\end{align*}
Now letting $t=(i+j)\pi h/L$ and denoting $\iota$ as the imaginary unit, we have  
\begin{align*}
    \sum_{k=1}^{K-1} \cos(kt)=\frac12 \sum_{k=1}^{K-1} e^{\iota kt}+e^{-\iota kt}
    &=\frac12 \left[\frac{e^{\iota t}(1-e^{\iota (K-1) t})}{1-e^{\iota t}}+\frac{e^{-\iota t}(1-e^{-\iota (K-1) t})}{1-e^{-\iota t}}\right]\\
    &=-\frac12\left[1+ (-1)^{i+j}\right],
\end{align*}
where we have used  that $K t=(i+j)\pi$. Similarly, we have 
\begin{align*}
    \sum_{k=1}^{K-1} \cos\left(\frac{k(i-j)\pi h}{L}\right)=\begin{cases}
    -\frac12\left[1+ (-1)^{i-j}\right]\quad &i\neq j\\
    K-1 \quad &i=j
    \end{cases}.
\end{align*}
Therefore we have 
\begin{align*}
    \langle \psi_{h,i},\psi_{h,j}\rangle=c_ic_j L \left(\frac{1}{6}\cos \frac{j\pi h}{L}+\frac13\right) \delta_{ij}, 
\end{align*}
where $\delta_{ij}$ denotes the Kronecker delta. Therefore the $\psi_{h,i}$'s are orthonormal with 
\begin{align*}
    c_i=\left[ L \left(\frac{1}{6}\cos \frac{i\pi h}{L}+\frac13\right)\right]^{-1/2}. 
\end{align*}
Now to bound the eigenfunction approximation error, we have 
\begin{align*}
    \|\psi_{h,i}-\psi_i\|_{\infty}\leq \|\psi_{h,i}-\widetilde{\psi}_{h,i}\|_{\infty}+\|\widetilde{\psi}_{h,i}-\psi_i\|_{\infty},
\end{align*}
where $\widetilde{\psi}_{h,i}=\frac{1}{c_i}\sqrt{\frac{2}{L}}\psi_{h,i}.$ Since $\|\psi_{h,i}\|_{\infty}\leq 1$, we have 
\begin{align*}
    \|\psi_{h,i}-\widetilde{\psi}_{h,i}\|_{\infty}\leq \left|1-\frac{1}{c_i}\sqrt{\frac{2}{L}}\right|\leq C\left|L\left(\frac{1}{6}\cos \frac{i\pi h}{L}+\frac13\right)-\frac{L}{2}\right|\leq C\left(\frac{i\pi h}{L}\right)^2=C\lambda_ih^2,
\end{align*}
where $C$ is constant depending only on $L$. To bound $\|\widetilde{\psi}_{h,i}-\psi_i\|_{\infty}$, notice that after the rescaling, $\widetilde{\psi}_{h,i}$ is a linear interpolant of $\psi_i$ over the nodes. In particular, denoting $x_k=kh$ we have 
\begin{align*}
    \widetilde{\psi}_{h,i}(x)=\psi_i(x_k) + \frac{x-x_k}{x_{k+1}-x_k} [\psi_i(x_{k+1})-\psi_i(x_k)]   \quad \quad \operatorname{on}\,\, [x_k,x_{k+1}].
\end{align*}
Taylor expanding at $x$ we have 
\begin{align*}
    \psi_i(x_{k+1})&=\psi_i(x)+\psi_i^{\prime}(x)(x_{k+1}-x)+\frac{\psi_i^{\prime\prime}(\eta_1)}{2}(x_{k+1}-x)^2 \quad \quad x<\eta_1<x_{k+1}\\
    \psi_i(x_k)&=\psi_i(x)+\psi_i^{\prime}(x)(x_k-x)+\frac{\psi_i^{\prime\prime}(\eta_2)}{2}(x_k-x)^2 \quad \quad x_k<\eta_2<x.
\end{align*}
Therefore 
\begin{align*}
    \widetilde{\psi}_{h,i}(x)=\psi_i(x)+\frac{\psi_i^{\prime\prime}(\eta_2)}{2}(x_k-x)^2+\frac{x-x_k}{x_{k+1}-x_k}\left[\frac{\psi_i^{\prime\prime}(\eta_1)}{2}(x_{k+1}-x)^2+\frac{\psi_i^{\prime\prime}(\eta_2)}{2}(x_k-x)^2\right]
\end{align*}
and since $\|\psi_i^{\prime\prime}\|_{\infty}\leq \lambda_i$ we have
\begin{align*}
    \underset{x\in[x_k,x_{k+1}]}{\operatorname{sup}}\,|\widetilde{\psi}_{h,i}(x)-\psi_i(x)| \leq C \lambda_i h^2. 
\end{align*}
Therefore $\|\widetilde{\psi}_{h,i}-\psi_i\|_{\infty}\leq C\lambda_ih^2$ and the result follows.
\end{proof}

\begin{proof}[Proof of Lemma \ref{lemma:rectangle FEM eigenfucntion infty bound}]
Notice that 
\begin{align*}
    |\Lambda_{\h,\i}-\Lambda_{\i}|\leq \sum_{d=1}^D |\lambda_{h_d,i_d}-\lambda_{i_d}|
\end{align*}
and 
\begin{align*}
    |\Psi_{\h,\i}(\x)-\Psi_{\i}(\x)|&=\left|\prod_{d=1}^D\psi_{h_d,i_d}(x_d)-\prod_{d=1}^D\psi_{i_d}(x_d)\right|\nonumber\\
    &\leq \sum_{d=1}^D\bigg( |\psi_{h_d,i_d}(x_d)-\psi_{i_d}(x_d)| \prod_{\ell=1}^{d-1} \left|\psi_{i_{\ell}}(x_{\ell})\right|  \prod_{\ell=d+1}^D |\psi_{h_{\ell},i_{\ell}}(x_{\ell})|\bigg).
\end{align*}
Therefore the result follows from the one-dimensional estimates in Lemma \ref{lemma:spectral bound 1d}. 
\end{proof}

\begin{proof}[Proof of Lemma \ref{lemma:infty field rate rectangle}]
We shall abuse the notation and order the multi index $\i\in \mathbb{N}^D$ as a single sequence $i\in\mathbb{N}$ so that  
\begin{align*}
    u&=\kappa^{s-D/2}\sum_{i=1}^{\infty} (\kappa^2+\Lambda_i)^{-\frac{s}{2}}\xi_i\Psi_i  \, ,\\
    u_{\h}&=\kappa^{s-D/2}\sum_{i=1}^{n_{\h}}(\kappa^2+\Lambda_{\h,i})^{-\frac{s}{2}}\xi_i\Psi_{\h,i} \, .
\end{align*}

\paragraph{Bound for $\mathbb{E}\|u_{\h}-u\|_2^2$}   This can be proven using the techniques in \cite[Theorem 2.10]{bolin2020numerical} and in the $L^{\infty}$ bound that we will establish below.

\paragraph{H\"older continuity}
We have 
\begin{align*}
    \mathbb{E}|u(\x+\h)-u(\x)|^2&\lesssim  \sum_{i=1}^{\infty}(\kappa^2+\Lambda_i)^{-s}|\Psi_i(\x+\h)-\Psi_i(\x)|^2\\
    &\lesssim \sum_{i=1}^{\infty}(\kappa^2+\Lambda_i)^{-s} \operatorname{min}\{\|\nabla \Psi_i\|_{\infty}^2h^2,1\}\\
    & \lesssim  \sum_{i=1}^{\infty}i^{-\frac{2s}{D}} \operatorname{min}\{i^{\frac{2}{D}}h^2,1\}\\
    & \lesssim \int_{x\geq 1} x^{-\frac{2s}{D}} \operatorname{min}\{x^{\frac{2}{D}}h^2,1\}dx\\
    & \lesssim  \left[\int_{1\leq x\leq h^{-D}}h^2x^{-\frac{2}{D}-\frac{2s}{D}} dx +\int_{x>h^{-D}} x^{-\frac{2s}{D}}dx\right]\\
    & \lesssim h^{2s-D},
\end{align*}
where we have used that $\|\nabla \Psi_i\|_{\infty}\leq \sqrt{\Lambda_i}$ and Weyl's law $\Lambda_i\asymp i^{2/D}$. Then by \cite[Corollary 6.8]{stuart2010inverse} we have 
\begin{align*}
    \mathbb{E}|u(\x)-u(\x')|^{2p}\leq C_p |\x-\x'|^{(2s-D)p}
\end{align*}
for all $p\in\mathbb{N}$. Kolmogorov continuity theorem \cite[Theorem 6.24]{stuart2010inverse} implies that $u$ is $\beta$-H\"older for $\beta<\frac{(2s-D)p-2D}{2p}$. Letting $p\rightarrow\infty$ gives the desired result.

\paragraph{Bound for $\mathbb{E}\|u_{\h}-u\|_{\infty}^2$} Consider two intermediate quantities
\begin{align}
    \tilde{u}&=\kappa^{s-D/2}\sum_{i=1}^{n_{\h}}(\kappa^2+\Lambda_i)^{-\frac{s}{2}}\xi_i\Psi_i \, , \label{eq:intermediate 1}\\
    \tilde{u}_{\h}&=\kappa^{s-D/2}\sum_{i=1}^{n_{\h}}(\kappa^2+\Lambda_i)^{-\frac{s}{2}}\xi_i\Psi_{\h,i} \, .\label{eq:intermediate 2}
\end{align}
We have 
\begin{align*}
    \mathbb{E}\|u-u_{\h}\|_{\infty}^2&\leq \mathbb{E}\left(\|u-\tilde{u}\|_{\infty}+\|\tilde{u}-\tilde{u}_{\h}\|_{\infty}+\|\tilde{u}_{\h}-u_{\h}\|_{\infty}\right)^2\\
    &\leq 2\left(\mathbb{E}\|u-\tilde{u}\|_{\infty}^2 +\mathbb{E}\|\tilde{u}-\tilde{u}_{\h}\|_{\infty}^2 +\mathbb{E}\|\tilde{u}_{\h}-u_{\h}\|_{\infty}^2\right)
\end{align*}
and it suffices to bound each term. 
Since the $\Psi_i$'s are uniformly bounded and that $\xi$ has bounded first moment, we have  
\begin{align}
    \mathbb{E}\|u-\tilde{u}\|_{\infty}^2 &\leq  \kappa^{2s-D} \mathbb{E}\left[\sum_{i=n_{\h}+1}^{\infty}(\kappa^2+\Lambda_i)^{-\frac{s}{2}}|\xi_i|\|\Psi_i\|_{\infty}\right]^2\nonumber \\
    & \lesssim \kappa^{2s-D} \left[\sum_{i=n_{\h}+1}^{\infty}(\kappa^2+\Lambda_i)^{-\frac{s}{2}}\right]^2\nonumber\\
    &\lesssim \kappa^{2s-D} \left(\sum_{i=n_{\h}+1}^{\infty} i^{-\frac{s}{D}}\right)^2 
    \lesssim \kappa^{2s-D} n_{\h}^{2-\frac{2s}{D}}\asymp \kappa^{2s-D}h^{2s-2D}. \label{eq:e1 infinity error rectangle}
\end{align}
Similarly, by Lemma \ref{lemma:rectangle FEM eigenfucntion infty bound}
\begin{align}
    \mathbb{E} \|\tilde{u}-\tilde{u}_{\h}\|_{\infty}^2 
    &\lesssim\kappa^{2s-D}\left[\sum_{i=1}^{n_{\h}}(\kappa^2+\Lambda_i)^{-\frac{s}{2}}\|\Psi_i-\Psi_{\h,i}\|_{\infty}\right]^2 \nonumber\\
    &\lesssim \kappa^{2s-D}h^4\left(\sum_{i=1}^{n_{\h}}\Lambda_i^{1-\frac{s}{2}}\right)^2 \lesssim \kappa^{2s-D}h^4\left[1\vee n_{\h}^{2+\frac{4}{d}\left(1-\frac{s}{2}\right)}
    \right]
    \asymp \kappa^{2s-D}h^{(2s-2s)\wedge 4}.\label{eq:e2 infinity error rectangle}
\end{align}
For the last term we have by Lemma \ref{lemma:rectangle FEM eigenfucntion infty bound}
\begin{align}
    \mathbb{E}\|\tilde{u}_{\h}-u_{\h}\|_{\infty}^2&\lesssim \kappa^{2s-D} \left[\sum_{i=1}^{n_{\h}} \left|(\kappa^2+\Lambda_i)^{-\frac{s}{2}}-(\kappa^2+\Lambda_{\h,i})^{-\frac{s}{2}}\right|\right]^2\nonumber \\
    & \lesssim \kappa^{2s-D}\left[\sum_{i=1}^{n_{\h}}\Lambda_i^{-\frac{s}{2}-1}|\Lambda_i-\Lambda_{h,i}|\right]^2 
    \lesssim \kappa^{2s-D}h^4 \left(\sum_{i=1}^{n_{\h}}\Lambda_i^{1-\frac{s}{2}}\right)^2
    \lesssim \kappa^{2s-D} h^{(2s-2d)\wedge 4}. \label{eq:e3 infinity error rectangle}
\end{align}
The result follows by combining \eqref{eq:e1 infinity error rectangle}, \eqref{eq:e2 infinity error rectangle}, \eqref{eq:e3 infinity error rectangle}.
\end{proof}

\begin{proof}[Proof of Proposition \ref{lemma:besov example infnity rectangle}]
Again we shall abuse the notation and write
\begin{align*}
    u&=\kappa^{s-D/2}\sum_{i=1}^{\infty} (\kappa^2+\Lambda_i)^{-s/2}\xi_i\Psi_i.
\end{align*}
\paragraph{$L^{\infty}$ case}
Recall that 
\begin{align*}
    \phi_{f_0}(\eps; u,\|\cdot\|_{\infty})= \underset{g\in\mathbb{H}:\|g-f_0\|_{\infty}<\eps}{\operatorname{inf}}\, \|g\|^2_{\mathbb{H}}- \log \mathbb{P}(\|u\|_{\infty}<\eps). 
\end{align*}
By \cite[Theorem 1.2]{li1999approximation}, the second term can be bounded by analyzing the $L^{\infty}(\mathcal{D})$ metric entropy of $\mathbb{H}_1$, the unit ball of $\mathbb{H}$. Notice that $\mathbb{H}_1$ takes the form
\begin{align*}
    \mathbb{H}_1=\left\{\sum_{i=1}^{\infty}g_i\Psi_i : \sum_{i=1}^{\infty}g_i^2(\kappa^2+\Lambda_i)^s\leq 1 \right\}
\end{align*}
and is contained in a Sobolev ball of order $s$, whose $L^{\infty}(\mathcal{D})$ metric entropy is bounded by a constant times $\eps^{-\frac{D}{s}}$ (see e.g. \cite[Theorem 3.3.2]{edmunds1996function}). Then \cite[Theorem 1.2]{li1999approximation} implies that 
\begin{align}
    - \log \mathbb{P}(\|u\|_{\infty}<\eps) \lesssim  \eps^{-\frac{2D}{2s-D}}=\eps^{-\frac{D}{\beta}}, \label{eq:small ball probability infinity rectangle}
\end{align}
where we used the assumption that $s=\beta+\frac{D}{2}$. 
For the first term, let $C_0=\|f_0\|_{B_{\infty,\infty}^{\beta}}$ and consider $g=S_J(\sqrt{\Delta})f_0$ with $J$ the smallest integer such that $C_02^{-\beta J}<\varepsilon$. 
Since $f_0\in B_{\infty,\infty}^{\beta}$ we have  
\begin{align}
    \|S_j(\sqrt{\Delta})f_0-f_0\|_{\infty}\leq C_0 2^{-\beta j} \label{eq:approx rate f_0 infinity rectangle}
\end{align}
for all $j$.  
In particular, $\|g-f_0\|_{\infty}\leq C_02^{-\beta J}\leq \eps$. Moreover we have $g=\sum_{i=1}^{\infty}S_J(\sqrt{\Lambda_i})f_i\Psi_i$ as a finite series since $S_J(\sqrt{\Lambda_i})=0$ if $\sqrt{\Lambda_i}> 2^J$ and hence $g\in\mathbb{H}$. Now since $S_j\leq 1$,
\begin{align*}
    \|g\|_{\mathbb{H}}^2\leq \sum_{\sqrt{\Lambda_i}\leq 2^{J}} f_i^2(\kappa^2+\Lambda_i)^s
    = \sum_{\sqrt{\Lambda_i}\leq 1} f_i^2(\kappa^2+\Lambda_i)^s + \sum_{j=1}^J \sum_{2^{j-1}< \sqrt{\Lambda_i}\leq 2^j}f_i^2(\kappa^2+\Lambda_i)^s.
\end{align*}
By \eqref{eq:approx rate f_0 infinity rectangle} we have 
\begin{align*}
    \sum_{2^{j-1}< \sqrt{\Lambda_i}\leq 2^j}f_i^2(\kappa^2+\Lambda_i)^s \lesssim 2^{2js} \sum_{2^{j-1}< \sqrt{\Lambda_i}}f_i^2 &\lesssim 2^{2js}\|S_j(\sqrt{\Delta})f_0-f_0\|_{2}^2 \\
    &\lesssim 2^{2js}\|S_j(\sqrt{\Delta})f_0-f_0\|_{\infty}^2\lesssim 2^{2(s-\beta)j}.
\end{align*}
Since $J$ is the smallest integer such that $C_02^{-\beta J}<\eps$, we have $2^{-\beta J}\gtrsim \eps$ and hence 
\begin{align}
    \|g\|_{\mathbb{H}}^2 \lesssim \sum_{\sqrt{\Lambda_i}\leq 1} f_i^2 + \sum_{j=1}^J 2^{2(s-\beta)j} \lesssim 2^{2(s-\beta)J}\lesssim \eps^{-\frac{2(s-\beta)}{\beta}}=\eps^{-\frac{D}{\beta}}. \label{eq:decentering function infinity rectangle}
\end{align}
Combining \eqref{eq:small ball probability infinity rectangle} and \eqref{eq:decentering function infinity rectangle} we deduce that
\begin{align*}
    \phi_{f_0}(\eps)\lesssim \eps^{-\frac{D}{\beta}}
\end{align*}
and setting $\eps_n=C n^{-\frac{\beta}{2\beta+D}}$ for a large enough constant $C$ gives the result.

\paragraph{$L^2$ case}We have 
\begin{align}
    \phi_{f_0}(\eps; u,\|\cdot\|_2)= \underset{g\in\mathbb{H}:\|g-f_0\|_2<\eps}{\operatorname{inf}}\, \|g\|^2_{\mathbb{H}}- \log \mathbb{P}(\|u\|_2<\eps). \label{eq:concentration function L^2}
\end{align}
For the second term, recall that $u=\kappa^{s-D/2}\sum_{i=1}^{\infty}(\kappa^2+\Lambda_i)^{-s/2}\xi_i\Psi_i$. We then have 
\begin{align}
    \log \mathbb{P}(\|u\|_2<\eps)&
    =\log \mathbb{P}\left(\kappa^{2s-D}\sum_{i=1}^{\infty}(\kappa^2+\Lambda_i)^{-s}\xi_i^2<\eps\right)\nonumber\\
    &\geq \log \mathbb{P}\left(\sum_{i=1}^{\infty}i^{-\frac{2s}{D}}\xi_i^2<C\eps\right)\gtrsim \eps^{-\frac{2}{\frac{2s}{D}-1}}=\eps^{-\frac{D}{\beta}},\label{eq:small ball probability L^2}
\end{align}
where the last step follows from \cite[Corollary 6]{dunker1998small} and the assumption that $s=\beta+\frac{D}{2}$. 
For the first term in \eqref{eq:concentration function L^2}, let $C_0=\|f_0\|_{B_{\infty,\infty}^{\beta}}$ and consider $g=S_J(\sqrt{\Delta})f_0$ with $J$ the smallest integer so that $C_0\sqrt{|\mathcal{D}|}2^{-\beta J}<\varepsilon$, where $|\mathcal{D}|$ is the Lebesgue measure of $\mathcal{D}$. Since $f_0\in B_{\infty,\infty}^{\beta}$ we have  
\begin{align*}
    \|S_j(\sqrt{\Delta})f_0-f_0\|_{\infty}\leq C_0 2^{-\beta j} 
\end{align*}
for all $j$. In particular, $\|g-S_J(\sqrt{\Delta})f_0\|_2\leq \sqrt{|\mathcal{D}|}\|g-S_J(\sqrt{\Delta})f_0\|_{\infty}\leq C_0\sqrt{|\mathcal{D}|}2^{-\beta J}\leq \eps$. Now proceeding in the same way as the argument in the $L^{\infty}$ case we obtain 
\begin{align}
    \|g\|_{\mathbb{H}}^2 \lesssim \sum_{\sqrt{\lambda_i}\leq 1} f_i^2 + \sum_{j=1}^J 2^{2(s-\beta)j} \lesssim 2^{2(s-\beta)J}\lesssim \eps^{-\frac{2(s-\beta)}{\beta}}=\varepsilon^{-\frac{D}{\beta}},\label{eq:decentering function L^2}
\end{align}
where we have used the fact that $2^{-\beta J}\gtrsim \eps$ since $J$ is the smallest integer such that $C_0\sqrt{|\mathcal{D}|}2^{-\beta J}$ $<\varepsilon$. Combining \eqref{eq:small ball probability L^2} and \eqref{eq:decentering function L^2} we deduce that
\begin{align*}
    \phi_{f_0}(\eps)\lesssim \eps^{-\frac{D}{\beta}},
\end{align*}
and setting $\eps_n=C n^{-\frac{\beta}{2\beta+D}}$ for a large enough constant $C$ gives the result.
\end{proof}

\section*{Acknowledgements}  
  Both authors are thankful for the support of NSF and NGA through the grant DMS-2027056. DSA is also supported by a Fundación BBVA start-up grant.  The authors are grateful to Ridgway Scott for helpful discussions. 

\bibliographystyle{abbrvnat} 
\bibliography{references}

\end{document}